\title{Toward Better Depth Lower Bounds: Strong Composition of XOR and a Random Function}
\titlerunning{Strong Composition of XOR and a Random Function}
\author{Nikolai Chukhin}{Neapolis University Pafos \& JetBrains Research}{buyolitsez1951@gmail.com}{}{}
\author{Alexander~S. Kulikov}{JetBrains Research \and\url{https://alexanderskulikov.github.io}}{alexander.s.kulikov@gmail.com}{https://orcid.org/0000-0002-5656-0336}{}
\author{Ivan Mihajlin}{JetBrains Research}{ivmihajlin@gmail.com}{}{}
\authorrunning{N.~Chukhin, A.~Kulikov, I.~Mikhajlin}
\keywords{complexity, formula complexity, lower bounds, Boolean functions, depth}
\DeclareMathOperator{\avgdeg}{avgdeg}
\DeclareMathOperator{\proj}{proj}
\newfunc{\CC}{CC}
\newfunc{\KW}{KW}
\newlang{\XOR}{XOR}
\newlang{\OR}{OR}
\begin{document}

\sloppy

\maketitle

\begin{abstract}
	Proving formula depth lower bounds is a~fundamental challenge in~complexity theory, with the strongest known bound of \((3 - o(1))\log n\) established by H{\aa}stad over 25 years ago. The Karchmer--Raz--Wigderson (KRW) conjecture offers a~promising approach to~advance these bounds and separate
    $\P$~from~$\NC^{1}$. It~suggests that the depth complexity of a~function composition
    $f \diamond g$ approximates the sum of the depth complexities of $f$~and~$g$.

	The Karchmer--Wigderson (KW) relation framework translates formula depth into communication complexity, restating the KRW conjecture as \(\mathsf{CC}(\mathsf{KW}_f \diamond \mathsf{KW}_g) \approx \mathsf{CC}(\mathsf{KW}_f) + \mathsf{CC}(\mathsf{KW}_g)\). Prior work has confirmed the conjecture under various relaxations, often replacing one or both KW relations with the universal relation or constraining the communication game through strong composition.

    In this paper, we examine the strong composition \(\mathsf{KW}_{\mathsf{XOR}} \circledast \mathsf{KW}_f\) of the parity function and a~random Boolean function~$f$.
	We prove that with probability $1-o(1)$, any protocol solving this composition requires at least \(n^{3 - o(1)}\) leaves. This result establishes a~depth lower bound of \((3 - o(1))\log n\), matching H{\aa}stad's bound, but is~applicable to a broader class of inner functions, even when the outer function is simple.
    Though bounds for the strong composition do not translate directly to
    formula depth bounds, they usually help to analyze the standard composition
    (of the corresponding two functions) which is directly related to formula depth.

    Our proof utilizes formal complexity measures. First, we apply Khrapchenko's method to show that numerous instances of~\(f\) remain unsolved after several communication steps. Subsequently, we transition to a~different formal complexity measure to demonstrate that the remaining communication problem is at least as hard as \(\mathsf{KW}_{\mathsf{OR}} \circledast \mathsf{KW}_f\). This hybrid approach not only achieves the desired lower bound, but also introduces a~novel technique for analyzing formula depth, potentially informing future research in complexity theory.
\end{abstract}

\section{Introduction}
Proving formula depth lower bounds is~an~important and difficult challenge
in~complexity theory: the strongest known lower bound $(3-o(1))\log n$
proved by~H{\aa}stad~\cite{DBLP:journals/siamcomp/Hastad98} (following a line of works starting from Subbotovskaya \cite{Sub61,DBLP:journals/rsa/ImpagliazzoN93,DBLP:journals/rsa/PatersonZ93}) remains unbeaten
for more than 25~years already (in~2014, Tal~\cite{DBLP:conf/focs/Tal14}
improved lower order terms in~this lower bound).
One of~the most actively studied
approaches to~this problem is~the one suggested by~Karchmer, Raz, and Wigderson~\cite{DBLP:journals/cc/KarchmerRW95}.
They conjectured that the naive approach of~computing a~composition
of~two functions is~close to~optimal. Namely, for two Boolean
functions $f \colon \{0,1\}^m \to \{0,1\}$ and $g \colon \{0,1\}^n \to \{0,1\}$,
define their composition $f \diamond g \colon \{0,1\}^{m \times n} \to \{0,1\}$
as~a~function that first applies~$g$ to~every row of~the input matrix
and then applies~$f$ to~the resulting column vector. The KRW conjecture then states
that $\D(f \diamond g)$ is~close to~$\D(f)+\D(g)$, where $\D(\cdot)$~denotes the minimum depth
of a~de~Morgan formula computing the given function.
Karchmer, Raz, and Wigderson~\cite{DBLP:journals/cc/KarchmerRW95} proved that
if~the conjecture is~true, then $\P \not \subseteq \NC^1$, that~is, there are functions
in~$\P{}$ that cannot be~computed in~logarithmic parallel time.

A~convenient way of~studying the KRW conjecture is~through the framework
of~Karchmer--Wigderson relation~\cite{DBLP:journals/siamdm/KarchmerW90}.
It~not only allows one to~apply the tools
from communication complexity, but also suggests various important special
cases of~the conjecture. For a~function $f \colon \{0,1\}^n \to \{0,1\}$,
the relation $\KW_f$ is~defined as~follows:
\[\KW_f=\{(a, b, i) \colon a \in f^{-1}(1), b \in f^{-1}(0), i \in [n], a_i \neq b_i\}.\]
The communication complexity $\CC(\KW_f)$ of~this relation is~the minimum number of~bits that Alice and Bob need to~exchange to~solve the following communication problem:
Alice is~given $a \in f^{-1}(1)$,
Bob is~given $b \in f^{-1}(0)$, and their goal is~to~find an~index $i \in [n]$ such that $(a,b,i) \in \KW_f$ (i.e., $a_i \neq b_i$).
Karchmer and Wigderson~\cite{DBLP:journals/siamdm/KarchmerW90}
proved that, for any function~$f$, the communication complexity of~$\KW_f$
is~equal to~the depth complexity of~$f$: $\CC(\KW_f)=\D(f)$. Within this framework,
the KRW conjecture is~restated as~follows: $\CC(\KW_f \diamond \KW_g)$
is~close to~$\CC(\KW_f)+\CC(\KW_g)$ (where $\KW_f \diamond \KW_g$
is~another name for $\KW_{f \diamond g}$).

One natural way of~relaxing the conjecture is~to~replace one or~both of~the two relations $\KW_f$ and~$\KW_g$ by the universal relation, defined as~follows:
\[U_n=\{(a, b, i) \colon a, b \in \{0,1\}^n, a \neq b, i \in [n], a_i \neq b_i\}.\]
Using a~universal relation instead of~the Karchmer--Wigderson relation makes the corresponding communication game only harder, hence proving lower bounds
for~it is~potentially easier and could lead to~the resolution of~the original conjecture. For this reason, such relaxations have been studied intensively.

Edmonds et al. \cite{DBLP:journals/cc/EdmondsIRS01} proved the KRW conjecture for the composition $U_m \diamond U_n$ of~two universal relations using communication complexity methods.
H{\aa}stad and Wigderson~\cite{DBLP:conf/dimacs/HastadW90} improved~it for a~higher degree of~composition using a~different approach.
Karchmer et~al.~\cite{DBLP:journals/cc/KarchmerRW95} extended this result
to~monotone functions.
H{\aa}stad~\cite{DBLP:journals/siamcomp/Hastad98} demonstrated the conjecture for the composition $f \diamond \XOR_n$ of an~arbitrary function $f \colon \{0,1\}^m \to \{0,1\}$ with the parity function $\XOR_n$.
This was later reaffirmed by~Dinur and Meir~\cite{DBLP:journals/cc/DinurM18} through a~communication complexity approach.
Further advancements were made
by~Gavinsky et~al.~\cite{DBLP:journals/siamcomp/GavinskyMWW17} who established the conjecture for the composition $f \diamond U_n$ of~any non-constant function $f \colon \{0,1\}^m \to \{0,1\}$ with the~universal relation~$U_n$.
Mihajlin and Smal~\cite{DBLP:conf/coco/MihajlinS21} proved the KRW conjecture for the composition of~a~universal relation with certain hard functions using $\XOR$-composition.
Subsequently, Wu~\cite{DBLP:journals/corr/abs-2310-07422} improved this result by~extending~it to~the composition of a~universal relation with a~wider range of functions (though still not with the majority of~them).
de~Rezende et~al.~\cite{DBLP:journals/cc/RezendeMNPR24} proved the conjecture
in~a~semi-monotone setting for a~wide range of~functions~$g$.

Another natural way of~relaxing the initial conjecture is~to~constrain the
communication game (instead of~allowing for more inputs for the game).
In~the \emph{strong composition} $\KW_f \circledast \KW_g$, Alice receives $X \in (f \diamond g)^{-1}(1)$ and Bob receives $Y \in (f \diamond g)^{-1}(0)$, and
their objective is~to~identify a~pair of~indices $(i, j)$ such that $X_{i, j} \neq Y_{i, j}$, similar to~the regular composition.
However, this~time it~must hold additionally
that $g(X_i) \neq g(Y_i)$.

This way of~relaxing the conjecture was considered in a~number of~previous papers and was
formalized recently by~Meir~\cite{DBLP:conf/focs/Meir23}.
H{\aa}stad and Wigderson, in~their proof of~the lower bound for two universal relations, initially establish the result for what they call the extended universal relation, a~concept closely related to strong composition.
Similarly, Karchmer et al. \cite{DBLP:journals/cc/KarchmerRW95} demonstrate that,
in~the monotone setting, strong composition coincides with the standard composition.
de Rezende et~al.~\cite{DBLP:journals/cc/RezendeMNPR24} utilized this notion, although without explicitly naming~it. Meir~\cite{DBLP:conf/focs/Meir23} formalized the notion
of~strong composition in~his proof of~the
relaxation of~the KRW conjecture.
\begin{theorem}[Meir, \cite{DBLP:conf/focs/Meir23}]
    \label{thm:meir}
    There exists a constant $\gamma > 0.04$ such that for every non-constant function $f \colon \{0,1\}^{m} \to \{0,1\}$ and for all $n \in \mathbb N$, there exists a function $g \colon \{0,1\}^{n} \to \{0,1\}$ such that
    \[
    \CC(\KW_f \circledast \KW_g) \geq \CC(\KW_f) - (1 - \gamma) m + n - O(\log (m n)).
    \]
\end{theorem}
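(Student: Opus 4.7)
The plan is to combine the probabilistic method for constructing $g$ with a reduction that converts any strong composition protocol into a $\KW_f$ protocol with bounded overhead. Sample $g \colon \{0,1\}^n \to \{0,1\}$ uniformly among balanced functions. Standard concentration arguments show that with probability $1-o(1)$, $g$ satisfies $\D(g) \geq n - O(\log n)$ together with several ``pseudorandom'' properties: both fibers $g^{-1}(0)$ and $g^{-1}(1)$ are large and roughly uniformly distributed, and $\KW_g$ remains hard on essentially every reasonably large subrectangle. Fix such a $g$.

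The main step is a reduction: any protocol $\pi$ of cost $c$ for $\KW_f \circledast \KW_g$ is transformed into a $\KW_f$ protocol of cost at most $c - n + (1-\gamma)m + O(\log(mn))$, from which the theorem follows by rearrangement. Given $a \in f^{-1}(1)$ to Alice and $b \in f^{-1}(0)$ to Bob, the reduced protocol first lets the players construct, using public randomness and their own inputs, virtual matrices $X, Y$ with $g(X_i) = a_i$ and $g(Y_i) = b_i$ for every row $i$; this requires no communication because $g$ is publicly known and each player samples her own side locally. Simulating $\pi$ on $(X, Y)$ then produces a pair $(i, j)$ with $g(X_i) \neq g(Y_i)$, so $i$ is a valid $\KW_f$ witness.

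To save $n$ bits relative to running $\pi$ in full, I would exploit the fact that $\pi$ must additionally produce the column $j$, which on row $i$ is essentially a $\KW_g$ task. I would define a ``commitment point'' as the earliest prefix of $\pi$'s transcript after which the row coordinate is effectively determined, and argue that the expected suffix length is at least $n - O(\log n)$: conditioned on the contents $X_i, Y_i$ of the committed row, the suffix must behave as a valid $\KW_g$ protocol on those inputs, and for a random $g$ this takes close to $n$ bits. In the reduced $\KW_f$ protocol, Alice and Bob halt at the commitment point and output $i$, trimming the expected transcript length by approximately $n$.

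The main obstacle is quantifying the loss $(1-\gamma)m$ with a strictly positive $\gamma > 0.04$. The commitment point need not exist cleanly in every transcript, so one must spread the ``wasted'' communication across rows and bound it probabilistically. I would introduce a potential function on subrectangles of $\pi$'s protocol tree, tracking per-row how much information the transcript has revealed about the pair $(X_i, Y_i)$, and show that the total wasted communication summed over rows is at most $(1-\gamma)m + O(\log(mn))$. The delicate quantitative calculation, leveraging the randomness of $g$ to control the per-row slack and to preclude cross-row shortcuts, is where the explicit constant $\gamma > 0.04$ emerges.
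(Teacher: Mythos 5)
This statement is Meir's theorem, which the paper cites without proof, so your sketch can only be compared against Meir's original argument. Its high-level plan --- pick a random, pseudorandomly hard $g$, lift inputs $a,b$ of $\KW_f$ locally to matrices $X,Y$ with $g(X_i)=a_i$ and $g(Y_i)=b_i$, simulate the strong-composition protocol, and truncate at a ``commitment point'' --- captures the standard intuition, and its easy half (the lifting, which already shows $\CC(\KW_f \circledast \KW_g) \ge \CC(\KW_f)$) is fine. But there is a genuine gap exactly where the theorem lives: you give no argument that the suffix after the commitment point must be long, i.e., that the protocol cannot already have made substantial progress on the $\KW_g$ instance of the row it eventually commits to while ostensibly working on $f$. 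Note the red flag: if your pointwise claim ``the suffix must behave as a valid $\KW_g$ protocol and hence costs $n-O(\log n)$'' were provable as stated, you would conclude $\CC(\KW_f \circledast \KW_g) \ge \CC(\KW_f) + n - O(\log mn)$ with no $(1-\gamma)m$ loss --- a strictly stronger statement than the theorem; the $(1-\gamma)m$ slack in Meir's bound is precisely the price of the obstruction you defer (a protocol can interleave, make partial progress on many rows simultaneously, and commit to a row that is already largely ``used up''). Your proposed remedy --- a potential function showing the total per-row waste is at most $(1-\gamma)m$, with $\gamma>0.04$ ``emerging'' from a delicate calculation --- is not a proof step but a restatement of the theorem's entire technical content. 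Meir's actual proof of this accounting is a long, intricate analysis of protocols for the strong composition together with a specifically constructed $g$ whose hardness properties are tailored to the argument; none of it follows from generic concentration facts about a uniformly random balanced $g$.

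A second problem sits inside the same gap: $\CC$ is worst-case depth, and the reduced $\KW_f$ protocol must be correct and short on every pair $(a,b)$, yet your argument speaks of the \emph{expected} suffix length and of trimming the \emph{expected} transcript. Average savings do not bound the depth of the truncated protocol; you would need the suffix (or waste) bound to hold at every commitment node reachable by a legal input. At such a node the residual rectangle may retain very little structure of $g$ on the committed row, so the assumption ``$\KW_g$ stays hard on all reasonably large subrectangles'' has to be formulated quantitatively and shown to survive the conditioning induced by the prefix --- again, this is where the real work of the theorem lies, and the proposal does not supply it.
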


\subsection{Our Result}
H{\aa}stad~\cite{DBLP:journals/siamcomp/Hastad98} proved the KRW conjecture for $\KW_f \diamond \KW_{\XOR}$.
However it is still an open question to prove the KRW conjecture for $\KW_{\XOR} \diamond \KW_f$.
In~this paper, we~study the strong composition $\KW_{\XOR_m} \circledast \KW_f$
of~the parity function $\XOR_m$
with a~random function $f \colon \{0,1\}^{\log m} \to \{0,1\}$.
Since Alice and~Bob receive an~input of~size $m \log m$, we~estimate the~size of $\KW_{\XOR_m} \circledast \KW_f$ in~terms of~$n = m \log m$.
It~is not difficult to~see that the communication complexity
of~the corresponding game is~at~most $3\log n$:
$\KW_f$ can be~solved in~$\log m$ bits of~communication,
whereas $\KW_{\XOR_m}$ can be~solved in $2\log m$ bits of~communication, using the standard divide-and-conquer approach (Alice sends the parity of~the first half,
Bob then identifies the half in~which the parity differs,
thus, by utilizing 2~bits of communication, the input size is reduced by a factor of two).
We~prove that
if~the function~$f$ is~well balanced and hard to~approximate (which happens with probability $1 - o(1)$),
then the bound $3\log n$ is~essentially optimal.
Below, we~state the result
in~terms of~the protocol size (i.e., the number of~leaves), rather than depth,
since this gives a~more general lower bound. In~particular, it~immediately
implies a~$(3-o(1))\log n$ depth lower bound.

\begin{restatable}{corollary}{maintheorem} \label{theorem:main}
    With probability $1 - o(1)$, for a~random function $f \colon \{0,1\}^{\log m} \to \{0,1\}$, any protocol solving $\KW_{\XOR_{m}} \circledast \KW_{f}$ has at~least $n^{3 - o(1)}$ leaves, where $n = m \log m$.
\end{restatable}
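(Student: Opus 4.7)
The plan is to lower-bound the number of leaves of any protocol for $\KW_{\XOR_m} \circledast \KW_f$ by a~hybrid formal complexity measure argument, split into two phases that mirror the two layers of the strong composition. I~fix a~random $f$ and, with probability $1-o(1)$, assume that $f$ is balanced and ``pseudorandom'' in the sense that it cannot be well-approximated by formulas of subcritical size; these properties follow from standard counting arguments. The overall target is to exhibit a~subadditive measure $\mu$ with $\mu\bigl((\XOR_m\diamond f)^{-1}(1) \times (\XOR_m\diamond f)^{-1}(0)\bigr) \geq m^{3-o(1)} = n^{3-o(1)}$, so that any monochromatic rectangle partition induced by a~protocol tree has this many parts.

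For the first phase I~would track Khrapchenko's measure $\mu_K(A\times B) = |E(A,B)|^2/(|A|\cdot|B|)$, where $E(A,B)$ counts the pairs $(X,Y)\in A\times B$ that differ in exactly one position $(i,j)$ with $f(X_i)\neq f(Y_i)$---precisely the ``solved'' instances of $\KW_{\XOR_m}\circledast\KW_f$. Balancedness of~$f$ implies that on the full rectangle each input has $m^{1-o(1)}$ such critical neighbors, giving $\mu_K = m^{2-o(1)}$. Since every communication step can decrease $\mu_K$ by at~most a~factor of two, any subtree driving $\mu_K$ below one must already contain $m^{2-o(1)}$ leaves. Simultaneously I~would track the set of \emph{live rows}---indices $i$ on which the protocol has not yet determined both $f(X_i)$ and $f(Y_i)$---and argue that on most surviving rectangles at least $m^{\Omega(1)}$ rows remain live, since killing more rows would have over-spent the Khrapchenko budget.

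In the second phase, on any rectangle with many live rows, I~would show that the residual problem embeds $\KW_{\OR_k}\circledast\KW_f$ for some $k = m^{\Omega(1)}$: the parties must still locate a~live row on which their $f$-values disagree and produce a~differing input bit within it. I~then replace $\mu_K$ by a~measure $\nu$ designed for $\KW_{\OR}\circledast\KW_f$, using the fact that for a~random $f$ one has $\CC(\KW_f)=(1-o(1))\log m$, to~extract an~additional factor of $m^{1-o(1)}$ leaves. Multiplying the two phases gives the claimed $m^{3-o(1)}$ bound.

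The hardest step will be the transition between $\mu_K$ and $\nu$: I~need a~single measure $\mu$ that agrees with $\mu_K$ while it is still large and with $\nu$ after it has collapsed, while remaining subadditive on \emph{every} protocol split. Concretely, the main obstacle is ruling out an~adversarial schedule that trades Khrapchenko decrements against row-killing moves so that neither phase's bound applies individually. To close this gap I~expect to prove a~dichotomy lemma showing that every rectangle in the protocol tree either retains high Khrapchenko measure or hosts a~large $\KW_{\OR_k}\circledast\KW_f$ instance, leveraging the probabilistic properties of a~random $f$ through a~union bound over rectangle types.
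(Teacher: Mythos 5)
Your high-level plan coincides with the paper's strategy (a Khrapchenko phase followed by a switch to a measure for an $\OR\circledast f$-type problem), but the step you yourself flag as open --- the transition --- is precisely the technical core of the paper, and the specific ingredients you propose for it would not suffice as stated. First, your phase-2 hardness is drawn from the worst-case bound $\CC(\KW_f)=(1-o(1))\log m$; this is the wrong quantity. After the protocol has communicated, each row's candidate values form a sub-rectangle of $f^{-1}(1)\times f^{-1}(0)$, and a sub-rectangle of a worst-case-hard function can be trivial to separate (e.g.\ if the surviving sets of row values are tiny). What is needed is hardness of \emph{approximation}, $\L_{\frac{3}{4}}(f)$, together with an invariant guaranteeing that the surviving row-value projections still cover a constant fraction of $\{0,1\}^{\log m}$. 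The paper engineers exactly this: it runs Khrapchenko not on raw inputs but on a graph whose nodes are $f$-value vectors with associated blocks of inputs, forces each block to shrink by at most a factor of two per round, bounds the number of ``sparse'' projections via a counting lemma (which in turn requires logarithmic depth, restored at the end by protocol balancing --- a step absent from your plan), and discards ``light'' edges and low-degree nodes while losing only a $(1-\tfrac{1}{\log m})$ factor of the measure per round. Your ``live rows'' bookkeeping (rows where $f(X_i)$, $f(Y_i)$ are undetermined) does not capture this: a row can remain ``live'' while its value sets have shrunk so much that its residual $\KW_f$ instance is easy, and the claim that killing rows ``over-spends the Khrapchenko budget'' is not supported by your raw-input measure $\mu_K$, whose decrements are not tied to row-value shrinkage.

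Second, your framing of the goal --- a single measure $\mu$ subadditive on every split that interpolates between $\mu_K$ and $\nu$, plus a dichotomy lemma against adversarial schedules --- is not how the difficulty is actually resolved, and there is no evidence in your sketch that such a single measure exists. The paper instead truncates the Khrapchenko analysis at ``special'' nodes (where the minimum average degree of its block graph drops below $O(\log^2 m)$), shows by approximate subadditivity that $\sum_S \psi(G_S) \ge m^{2-o(1)}$ over these nodes, and only \emph{below} each special node introduces a second subadditive measure $\xi$ (a sum of formula sizes of projected sub-rectangles over the heavy edges at a high-degree vertex), whose initial value is $\Omega(d\cdot \L_{\frac{3}{4}}(f))$ with $d\approx \psi(G_S)/\log^2 m$. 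The multiplication of the two phases is legitimate only because the second measure's starting value at each special node is proportional to the residual Khrapchenko measure there \emph{and} because heaviness of edges guarantees $|\proj_{l_i}\mathcal B_S(v)|+|\proj_{l_i}\mathcal B_S(u_i)|\ge \tfrac34 m$, so each edge costs $\Omega(\L_{\frac{3}{4}}(f))$. In your proposal this product is asserted (``multiplying the two phases'') rather than derived, and the missing dichotomy lemma is exactly the content you would have to build; as written, the argument has a genuine gap at its center.
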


In turn, this result follows from the following general lower bound, given in terms of $\L_{\frac{3}{4}} $ that stands for the smallest size of~a~formula that agrees with~$f$ on~a~$\frac{3}{4}$ fraction of~inputs.
\begin{restatable}{theorem}{mainclasses} \label{theorem:main_classes}
    For any $0.49$-balanced function $f \colon \{0,1\}^{\log m} \to \{0,1\}$, any protocol solving $\KW_{\XOR_{m}} \circledast \KW_{f}$ has at~least $n^{2 - o(1)} \cdot \L_{\frac{3}{4}}(f)$ leaves, where $n = m \log m$.
\end{restatable}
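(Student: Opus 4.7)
The plan is to prove the bound via two formal complexity measures applied in sequence, reflecting the two layers of $\XOR_m \diamond f$. First we adapt Khrapchenko's method to the strong composition and show that the $\XOR$ layer alone forces the protocol to branch into $n^{2-o(1)}$ essentially disjoint sub-rectangles. Then we switch to a second measure and argue that, in each of these sub-rectangles, the residual subproblem is at least as hard as $\KW_{\OR_m}\circledast\KW_f$, which itself requires at least $\L_{\frac{3}{4}}(f)$ leaves. Combining the two bounds yields the claimed $n^{2-o(1)}\cdot \L_{\frac{3}{4}}(f)$.

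For the first stage, call a pair $(X,Y)\in A\times B$ \emph{sensitive} if $X$ and $Y$ differ in exactly one position $(i,j)$ and bit $j$ is sensitive for $f$ at $X_i$ (so that $f(X_i)\neq f(Y_i)$, making $(i,j)$ a valid answer to $\KW_{\XOR_m}\circledast\KW_f$). Let $P(A,B)$ count the sensitive pairs in $A \times B$ and set $\kappa(A\times B)=P(A,B)^2/(|A|\cdot|B|)$. Subadditivity of $\kappa$ under an Alice/Bob split follows from Cauchy--Schwarz in the standard way. Using $0.49$-balance of $f$ — which guarantees that a random input has expected sensitivity $\Omega(1)$ and that the $\XOR$ of the $f$-values across rows is approximately balanced — one computes $\kappa$ on the root rectangle to be $\Omega(n^{2-o(1)})$. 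On any leaf labelled $(i^\star,j^\star)$, the only sensitive pair containing a fixed $X$ is obtained by flipping bit $(i^\star,j^\star)$, so $\kappa\leq 1$. This alone already yields an $n^{2-o(1)}$ leaf bound; the refinement below is what brings in the extra factor of $\L_{\frac{3}{4}}(f)$.

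For the second stage, we introduce a second formal complexity measure $\mu$ tailored to $\KW_{\OR_m}\circledast\KW_f$. The key observation is that once the protocol has ``committed'' to an output row $i^\star$, it still has to certify $f(X_{i^\star})\neq f(Y_{i^\star})$ and produce a distinguishing bit, which is precisely $\KW_{\OR_m}\circledast\KW_f$ on that single row. We would prove, as a separate lemma, that any protocol for $\KW_{\OR_m}\circledast\KW_f$ has at least $\L_{\frac{3}{4}}(f)$ leaves, by translating such a protocol into a formula of that size agreeing with $f$ on $\tfrac{3}{4}$ of inputs and contradicting the definition of $\L_{\frac{3}{4}}(f)$ otherwise. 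The two stages are then fused into a single hybrid measure $M$ that agrees with $\kappa$ near the root and with $\mu$ near the leaves, is globally subadditive under every communication step, evaluates to $\Omega(n^{2-o(1)}\cdot \L_{\frac{3}{4}}(f))$ on the root, and is $O(1)$ on any leaf.

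The main obstacle is the bookkeeping behind the hybrid measure $M$: one must rule out protocols that interleave progress on the $\XOR$ layer with progress on individual rows in a way that pays neither cost in full. A single communication bit may reduce $\kappa$ (XOR progress) and $\mu$ (row-level $\KW_f$ progress) simultaneously, and verifying subadditivity of $M$ globally — rather than of $\kappa$ and $\mu$ in isolation — is the technical crux. Designing $M$ so that concurrent $\XOR$ progress and $\KW_f$ progress are charged correctly, and showing that the transition between the two regimes does not lose a super-polylogarithmic factor, is where the hybrid approach advertised in the abstract does its real work.
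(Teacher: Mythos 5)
There is a genuine gap: the entire weight of the argument rests on the hybrid measure $M$, and you never construct it — you only postulate that it ``agrees with $\kappa$ near the root and with $\mu$ near the leaves'' and is globally subadditive, and you yourself flag its subadditivity as the unresolved crux. The paper does not (and arguably cannot, in any naive way) produce such a single exactly subadditive measure. Instead it restricts attention to protocols of depth $O(\log m)$ (recovering the general case afterwards via Bonet--Buss balancing, \Cref{theorem:formula_balancing}, a step your proposal omits), and tracks Khrapchenko's $\psi$ not on raw matrix pairs but on a \emph{block graph}: hypercube vertices $v\in\{0,1\}^m$ whose blocks collect all inputs $X$ with $f(X)=v$. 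Each protocol step is allowed to lose a multiplicative $(1-\tfrac1{\log m})$ factor (so the first-stage measure is only approximately subadditive, which is exactly why the depth bound is needed), and the construction enforces that every block shrinks by at most a factor of two per step, so that by \Cref{lemma:sparse_projections_bound} only $O(\log m)$ edges per vertex ever become ``light''; light edges and low-degree vertices are pruned (\Cref{lemma:degree_changes,lemma:balancing_graph}).

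This bookkeeping is not optional: it is what makes your second stage possible. The factor $\L_{\frac34}(f)$ enters only because at a ``special'' node every surviving (heavy) edge $\{v,u_i\}$ has both blocks projecting onto at least $\tfrac38 m$ inputs of $f$ in the relevant row, so each edge contributes $\Omega(\L_{\frac34}(f))$ to the second measure $\xi(\mathcal X\times\mathcal Y)=\sum_i \L(\proj_{l_i}\mathcal B\times\proj_{l_i}\mathcal B_i)$ of \Cref{lemma:changing_measure}; the normalization of $\xi$ at leaves uses the strong-composition condition $f(X_i)\neq f(Y_i)$ to argue that at most one edge-block survives in a leaf rectangle. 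Your first-stage measure $\kappa$, which counts sensitive matrix pairs, retains no information about how large the row projections of the surviving inputs are, so when ``$\XOR$ is nearly solved'' the residual rectangles may involve only tiny slices of $f$'s domain and need not cost anywhere near $\L_{\frac34}(f)$. Relatedly, the final accounting is not ``$n^{2-o(1)}$ disjoint sub-rectangles each costing $\L_{\frac34}(f)$'': the paper charges each special node $\Omega(d\cdot\L_{\frac34}(f))$ where $d\approx\psi(G_S)/\log^2 m$ is a vertex degree, and sums $\psi$ over special nodes via the approximate subadditivity \eqref{eq:semi_subadditivity}; the number of special nodes can be far smaller than $m^2$. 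Without the block structure, the heavy/light edge analysis, and the depth-reduction step, the proposal reduces to the (correct but standard) $n^{2-o(1)}$ Khrapchenko bound plus an unproven wish for the extra $\L_{\frac34}(f)$ factor.
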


In~contrast to~many results mentioned above and similarly to~the bound
by~de~Rezende at al.~\cite{DBLP:journals/cc/RezendeMNPR24}, our result works for a~wide range
of~inner functions $f$, what brings us closer to resolving KRW, which makes a claim about the complexity of composing any pair of functions. Also, many of~the previous techniques
work well in~the regime where the outer function is~hard
and give no~strong lower bounds when the outer function is~easy (as~it~is the case with the $\XOR{}$ function).
For example, random restrictions (as~one of~the most successful methods
for proving lower bounds) does not seem to~give meaningful lower bounds for
$\KW_{\XOR_{m}} \circledast \KW_{f}$,
as~under a~random restriction this composition turns into a~\XOR{}
of a~small number of~variables which is~easy to~compute.
The lower bound by~Meir (see \Cref{thm:meir}) also gives strong lower bounds
in~the regime where the outer function is~hard (and only gives a~trivial lower bound of~the form $o(\log n)$ for the function that we~study).

To~prove the lower bound, we~exploit formal complexity measures.
As~in~\cite{DBLP:journals/cc/EdmondsIRS01, DBLP:conf/coco/MihajlinS21},
we~consider two stages of a~protocol solving $\KW_{\XOR_{m}} \circledast \KW_{f}$.
During the first stage, we~track the progress using the classical measure by~Khrapchenko~\cite{K71} and ensure that even after many steps of~the
protocol, there are still many instances of~$f$ that need to~be solved.
At~the second stage, we~switch to~another formal complexity measure
and show that the remaining communication problem~is, roughly, not easier
than $\KW_{\OR} \circledast \KW_{f}$.
We~believe that this proof technique is~interesting on~its own,
since it~is not only easy to~show that Khrapchenko's measure cannot
give superquadratic size lower bounds, but it~is also known
that natural generalizations of~this measure are also unable
to~give stronger than quadratic lower bounds~\cite{DBLP:journals/tcs/HrubesJKP10}.
For more details on Khrapchenko's measure and its limitations, see~\Cref{section:graphs,section:formalcomplexity}.

\section{Notation, Known Facts, and Technical Lemmas}
Throughout the paper, $\log$ denotes the binary logarithm
whereas $\ln$ denotes the natural logarithm.
By~$n$ we~usually denote the size of~the input. All asymptotic estimates are given
under an~implicit assumption that $n$~goes to~infinity. By~$[n]$, we denote the set
$\{1, 2, \dotsc, n\}$.
By $\R_+$ we denote the set $\{x \in \R  \colon x > 0 \}$.
We utilize the following asymptotic estimates for binomial coefficients. For any constant $0 < \alpha < 1$,
\begin{equation}
	\label{eq:binomialestimate}
	\Omega(n^{-1/2})2^{h(\alpha)n} \leq \binom{n}{\alpha n} \leq 2^{h(\alpha)n},
\end{equation}
where
\(h(x) = -x \log x - (1-x) \log (1-x)\)
denotes the binary entropy function.

For a~string $x \in \{0,1\}^n$, its $i$-th
bit of~$x$ is~denoted by~$x_i$.
For a~matrix $X \in \{0,1\}^{m \times n}$,
by~$X_i$ we~denote the $i$-th row of~$X$ and by~$X_{i,j}$ we~denote the bit of~$X$
in~the intersection of~the $i$-th row and the $j$-th column.

\subsection{Graphs} \label{section:graphs}
    For a~rooted tree, the \emph{depth} of~its node is~the number of~edges on~the path from
    the node to~the root; the depth of~the tree is~the maximum depth of~its nodes.

	Let $G(V,E)$ be a~graph and $\varnothing \neq A \subseteq V$ be~its nonempty subset of~nodes. By~$G[A]$, we~denote a~subgraph of~$G$ induced by~$A$.
	By~$\avgdeg(G, A)$, we~denote the average degree of~$A$:
	\begin{align}
		\avgdeg(G, A) = \frac{1}{|A|} \sum_{v \in A} \deg(v)\,. \label{df:avgdeg}
	\end{align}

	For a~biparite graph $G(A \sqcup B, E)$ with nonempty parts, let
	\begin{equation}
		\label{eq:graphmeasure}
		\psi(G) = \avgdeg(G, A) \cdot \avgdeg(G, B) \, .
	\end{equation}
	Clearly, $\psi(G) \le |A| \cdot |B|$.
	The lemma below shows that this graph measure is~subadditive.

	\begin{lemma}
		\label{lemma:khrapchekno_subadditive}
		Let $G(A \sqcup B, E)$ be~a~bipartite graph and $A=A_L \sqcup A_R$
		be a~partition of~$A$ into two parts. Let
		$G_L=G[A_L \sqcup B]$
		and
		$G_R=G[A_R \sqcup B]$.
		Then,
		\[
		\psi(G) \leq \psi(G_L) + \psi(G_R).
		\]
	\end{lemma}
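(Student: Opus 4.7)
The plan is to rewrite $\psi$ in a form that makes the inequality a direct consequence of Cauchy--Schwarz. The key observation is that for any bipartite graph $H(A' \sqcup B', F)$ with nonempty parts, the sum of degrees from either side equals $|F|$, so
\[
\psi(H) = \avgdeg(H, A') \cdot \avgdeg(H, B') = \frac{|F|}{|A'|} \cdot \frac{|F|}{|B'|} = \frac{|F|^2}{|A'| \cdot |B'|}.
\]
Applying this to $G$, $G_L$, and $G_R$, and setting $e = |E|$, $e_L = |E(G_L)|$, $e_R = |E(G_R)|$, $a_L = |A_L|$, $a_R = |A_R|$, $b = |B|$, we have $e = e_L + e_R$ (since every edge of $G$ has its $A$-endpoint in exactly one of $A_L, A_R$), so the claim becomes
\[
\frac{(e_L + e_R)^2}{(a_L + a_R)\, b} \;\leq\; \frac{e_L^2}{a_L \, b} + \frac{e_R^2}{a_R \, b}.
\]
Cancelling the common factor $b$, this reduces to the Engel form of the Cauchy--Schwarz inequality (a.k.a.\ Titu's lemma): $\frac{x^2}{p} + \frac{y^2}{q} \ge \frac{(x+y)^2}{p+q}$ for positive $p, q$. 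I would prove this inline in one step by writing
\[
(e_L + e_R)^2 = \left( \sqrt{a_L} \cdot \tfrac{e_L}{\sqrt{a_L}} + \sqrt{a_R} \cdot \tfrac{e_R}{\sqrt{a_R}} \right)^{\!2} \leq (a_L + a_R)\left(\tfrac{e_L^2}{a_L} + \tfrac{e_R^2}{a_R}\right),
\]
and then dividing by $(a_L + a_R)\,b$.

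The only fussy point is the degenerate cases. If $A_L = \varnothing$ or $A_R = \varnothing$, one of the induced subgraphs is not a bipartite graph with two nonempty parts, and $\psi$ is undefined on it; but then the partition is trivial and the inequality $\psi(G) \le \psi(G)$ holds vacuously after interpreting the empty term as $0$. If some $a_L$ or $a_R$ is nonzero but the corresponding edge count is $0$, the Cauchy--Schwarz step still goes through, so no separate argument is needed. Overall, there is no real obstacle: once the identity $\psi(H) = |F|^2/(|A'||B'|)$ is written down, the lemma is a one-line application of Cauchy--Schwarz.
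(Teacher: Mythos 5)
Your proposal is correct and follows essentially the same route as the paper: both rewrite $\psi$ as $|E|^2/(|A||B|)$, use $|E| = |E_L| + |E_R|$, and reduce to the Engel-form inequality, which the paper verifies by expanding it into the perfect square $(|E_R||A_L| - |E_L||A_R|)^2 \ge 0$ rather than citing Cauchy--Schwarz. Your handling of the degenerate case $A_L = \varnothing$ or $A_R = \varnothing$ is a minor point the paper leaves implicit, but otherwise the arguments coincide.
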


	\begin{proof}
		Let $E_L$ and $E_R$ be~the set of~edges of~$G_L$ and $G_R$, respectively.
		Clearly, $E = E_L \sqcup E_R$. Then,
		\begin{align*}
			\psi(G) \leq \psi(G_L) + \psi(G_R) & \iff \frac{|E|^2}{(|A_L| + |A_R|)|B|} \leq \frac{|E_L|^2}{|A_L||B|} + \frac{|E_R|^2}{|A_R||B|}  \\
			& \iff \frac{|E_L|^2 + |E_R|^2 + 2 |E_L||E_R|}{|A_L| + |A_R|} \leq \frac{|E_L|^2}{|A_L|} + \frac{|E_R|^2}{|A_R|} \\
			& \iff 2 |E_L| |E_R| |A_L| |A_R| \leq |E_R|^2 |A_L|^2 + |E_L|^2 |A_R|^2 \\
			& \iff 0 \leq (|E_R||A_L| - |E_L||A_R|)^2.
		\end{align*}
	\end{proof}

	The next lemma shows that if~$G$~contains a~node of~small enough degree,
	then deleting~it not only does not drop~$\psi$, but also does not drop too
	much the average degree of~the parts.

	\begin{lemma}
		\label{lemma:minimal_degree_balancing}
		Let a~node $a \in A$ of a~bipartite graph $G(A \sqcup B, E)$
		satisfy \(\deg(G, a) \leq {\avgdeg(G, A)}/{2}\)
		and let $A'=A \setminus \{a\}$ and
		$G'(A' \sqcup B, E')=G[A \setminus \{a\} \sqcup B]$. Then,
		\begin{align}
			\psi(G') &\geq \psi(G), \label{eq:minimal_degree_balancing:mu}\\
			\avgdeg(G', A') &\geq \avgdeg(G, A), \label{eq:minimal_degree_balancing:d_1}
		\end{align}
	\end{lemma}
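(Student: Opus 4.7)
The plan is to reduce both inequalities to elementary algebra by introducing the parameters $e = |E|$, $\alpha = |A|$, $\beta = |B|$, and $d = \deg(G, a)$. Under this notation $\avgdeg(G, A) = e/\alpha$, so the hypothesis becomes $d \leq e/(2\alpha)$, or equivalently $2\alpha d \leq e$. Deleting $a$ yields $|E'| = e - d$ and $|A'| = \alpha - 1$, while $|B|$ is unchanged, so $\avgdeg(G', A') = (e - d)/(\alpha - 1)$, $\psi(G) = e^2/(\alpha \beta)$, and $\psi(G') = (e - d)^2/((\alpha - 1)\beta)$.

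For the average-degree inequality, cross-multiplying $(e - d)/(\alpha - 1) \geq e/\alpha$ yields $\alpha d \leq e$, which follows immediately from the hypothesis. For the $\psi$-inequality, $\psi(G') \geq \psi(G)$ rearranges to $\alpha(e - d)^2 \geq (\alpha - 1) e^2$, which after expanding the square and cancelling the $\alpha e^2$ terms becomes $e^2 \geq \alpha d(2e - d)$. Applying the hypothesis in the form $2\alpha d \leq e$ together with the trivial bound $2e - d \leq 2e$, I estimate $\alpha d(2e - d) \leq 2\alpha d \cdot e \leq e \cdot e = e^2$, which is the required bound.

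No step here constitutes a genuine obstacle: the lemma is a compact algebraic fact, and the constant $1/2$ in the hypothesis $\deg(a) \leq \avgdeg(G, A)/2$ is calibrated precisely so that the factor of $2$ appearing on the right of $\alpha d(2e - d)$ can be absorbed by the hypothesis with no slack needed from the $-d$ term. The only side condition to watch is $\alpha \geq 2$, which is implicit once one insists that $A'$ be nonempty so that $\avgdeg(G', A')$ is defined at all.
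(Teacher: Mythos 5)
Your proof is correct and follows essentially the same route as the paper: both reduce each inequality to elementary algebra in $|E|$, $|A|$, $|B|$, $\deg(G,a)$, with your bound $\alpha d(2e-d)\le 2\alpha d\,e\le e^2$ playing the same role as the paper's step of discarding the $d^2$ term. Your remark that $|A|\ge 2$ is implicitly needed for $G'$ to have nonempty parts is a fair (and correct) observation that the paper leaves tacit.
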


	\begin{proof}
		The inequality $\avgdeg(G', A') \geq \avgdeg(G, A)$ holds since
		$A'$~results from~$A$ by~removing a~node of~degree less than the average degree.

		To~prove the inequality~\eqref{eq:minimal_degree_balancing:mu}, let
		$d=\deg(G, a)$. Then,
		$|E'|=|E|-d$ and
		\begin{align*}
			\psi(G') \geq \psi(G)
			&\iff \frac{(|E| - d)^2}{(|A| - 1)|B|} \geq \frac{|E|^2}{|A||B|}  \\
			&\iff \frac{|E|^2 - 2|E|d + d^2}{(|A| - 1)|B|} \geq \frac{|E|^2}{|A||B|} \\
			&\Longleftarrow \frac{|E| - 2d}{|A| - 1} \geq \frac{|E|}{|A|}\\
			&\iff |E||A| - 2d|A| \geq |E|(|A| - 1)\\
			&\iff d \leq \frac{|E|}{2|A|} = \frac{\avgdeg(G, A)}{2}.
		\end{align*}

	\end{proof}

\subsection{Boolean Functions}
By~$\mathbb B_n$, we~denote the set of~all Boolean functions on~$n$~variables.
For two disjoint sets $A, B \subseteq \{0,1\}^n$, the set $A \times B$
is~called a~\emph{combinatorial rectangle}, and
it~is called \emph{full} if $A$~and~$B$ form a~partition of~$\{0,1\}^n$.
Clearly, there~is a~bijection between $\mathbb B_n$ and full combinatorial rectangles.
For $f \in \mathbb B_n$, by $R_f=f^{-1}(1) \times f^{-1}(0)$,
we~denote the corresponding full rectangle.
We~say that a~Boolean function~$f$ is~\emph{balanced} if $|f^{-1}(0)|=|f^{-1}(1)|$.

In~this paper, it~will prove convenient to~apply a~function $g \in \mathbb B_m$
not only to Boolean vectors $x \in \{0,1\}^m$, but also to~matrices $X \in \{0,1\}^{n \times m}$:
\[g(X)=(g(X_1), \dotsc, g(X_n)),\]
i.e., $g(X) \in \{0,1\}^n$ results by~applying $g$~to every row of~$X$.
This allows to~define a~composition in a~natural way.
For
$f \in \mathbb B_m$ and $g \in \mathbb B_n$,
their \emph{composition}
$f \diamond g \colon \{0,1\}^{m\times n} \to \{0,1\}$ treats the input
as an~$m \times n$ matrix and first applies~$g$
to~all its rows and then applies~$f$ to~the resulting column-vector:
\[f \diamond g(X) = f(g(X)) = f(g(X_1), \dotsc, g(X_m)).\]
For a~set of~matrices $\mathcal X \subseteq \{0,1\}^{m \times n}$,
by~$i$-th projection $\proj_i \mathcal X$, we~denote the set of~all $i$-th rows among the matrices of~$\mathcal X$:
\begin{equation}
    \label{df:projection}
	\proj_i \mathcal X = \{X_i \colon X \in \mathcal{X}\}=\{t \in \{0,1\}^{n} \colon \exists X \in \mathcal X \colon t = X_i\}.
\end{equation}

In~the proof of~the main result, we~will be~dealing with Boolean matrices of~dimension $n \times \log n$. Let $\mathcal X \subseteq \{0,1\}^{n \times \log n}$ be a~set of~such matrices.
We~say that $\mathcal X$ is \emph{$\alpha$-bounded} if $|\proj_i \mathcal X| \leq \alpha n$, for all $i \in [n]$. The $i$-th projection of~$\mathcal X$ is~called \emph{sparse} if
$|\proj_i \mathcal X|<\frac{3}{8} n$, and \emph{dense} otherwise.
The following lemma shows that if~$|\mathcal X|$
is~large and $\mathcal X$ is $\alpha$-bounded, then the number of~sparse projections of~$\mathcal X$ is~low.
Later~on, we~will be~applying this lemma for~$\mathcal X$ which is almost $0.5$-bounded and whose size gradually decreases to~argue that the number of~sparse projections cannot grow
too fast.

\begin{lemma}
    \label{lemma:sparse_projections_bound}
    Let $k \in \mathbb N$ and $\alpha \in \left(\frac{3}{8}, \frac{1}{2}\right]$.
    If $\mathcal X \subseteq \{0,1\}^{n \times \log n}$ is~$\alpha$-bounded and
    $|\mathcal X| \geq \alpha^{n} \frac{n^{n}}{2^{k}}$, then
    the number of~sparse projections of~$\mathcal X$ does not exceed~$k \log^{-1} \frac{8 \alpha}{3}$.
\end{lemma}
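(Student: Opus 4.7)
The plan is to bound $|\mathcal X|$ from above by the product of the projection sizes and then compare with the given lower bound on $|\mathcal X|$. The key observation is that every matrix $X \in \mathcal X$ is determined by its $n$ rows, so the map $X \mapsto (X_1, \dots, X_n)$ is injective into $\prod_{i=1}^n \proj_i \mathcal X$. Hence
\[
|\mathcal X| \;\le\; \prod_{i=1}^n |\proj_i \mathcal X|.
\]

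Let $s$ denote the number of sparse projections. Sparse projections contribute a factor of at most $\frac{3}{8}n$ each, and, since $\mathcal X$ is $\alpha$-bounded, the remaining $n - s$ projections contribute a factor of at most $\alpha n$ each. Combined with the assumed lower bound $|\mathcal X| \ge \alpha^n n^n / 2^k$, this gives
\[
\frac{\alpha^n n^n}{2^k} \;\le\; \left(\tfrac{3}{8} n\right)^{s} (\alpha n)^{n - s}.
\]
Cancelling the common factor $n^n$ and dividing both sides by $\alpha^n$ yields
\[
\frac{1}{2^k} \;\le\; \left(\frac{3}{8\alpha}\right)^{s}.
\]

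Taking binary logarithms and using that $\alpha > \tfrac{3}{8}$, so that $\log\frac{8\alpha}{3} > 0$, we conclude
\[
s \;\le\; \frac{k}{\log \frac{8 \alpha}{3}} \;=\; k \log^{-1} \frac{8\alpha}{3},
\]
which is exactly the claimed bound. I do not foresee any real obstacle: this is a clean counting/product argument, and the only thing to watch is that the hypothesis $\alpha \in (\tfrac{3}{8}, \tfrac{1}{2}]$ is precisely what makes the logarithm in the denominator positive, so that the inequality goes the right way.
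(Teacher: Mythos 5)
Your proof is correct and follows essentially the same route as the paper: bound $|\mathcal X|$ by $\prod_i |\proj_i \mathcal X|$, use $\alpha$-boundedness for dense projections and the $\tfrac{3}{8}n$ threshold for sparse ones, and compare with the assumed lower bound to get $2^{-k} \le \left(\tfrac{3}{8\alpha}\right)^s$. The paper merely phrases the same computation via parameters $\beta_i$ with $|\proj_i \mathcal X| = \beta_i \alpha n$; there is no substantive difference.
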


\begin{proof}
    Let $\beta_i \in [0, 1]$ be such that $|\proj_i\mathcal X| = \beta_i \alpha n$. The $i$-th projection is sparse if and only if $\beta_i < \frac{3}{8 \alpha}$. Let $t$~be the number of~sparse projections and assume,
	without loss of~generality, that the first $t$~projections are sparse.
	Then,
    \begin{align*}
		\alpha^{n} \frac{n^{n}}{2^{k}} &\leq |\mathcal X| \leq \prod_{i = 1}^{n} |\proj_i\mathcal X| = (\alpha n)^{n} \prod_{i = 1}^{n} \beta_i \iff \\
		\frac{1}{2^{k}} &\leq \prod_{i = 1}^{n} \beta_i = \prod_{i = 1}^{t} \beta_i \cdot \prod_{i = t + 1}^{n} \beta_i < \left(\frac{3}{8 \alpha}\right)^t \implies k \log^{-1} \frac{8 \alpha}{3} \geq t.
	\end{align*}
\end{proof}

\subsection{Boolean Formulas}
The computational model studied in~this paper is~\emph{de Morgan formulas}:
it~is a~binary tree whose leaves are labeled by~variables $x_1, \dotsc, x_n$
and their negations whereas internal nodes (called gates) are labeled by~$\lor$ and~$\land$ (binary disjunction and conjunction, respectively).
Such a~formula \emph{computes} a~Boolean function $f(x_1, \dotsc, x_n) \in \mathbb B_n$. We~also say that a~formula~$F$ \emph{separates}
a~rectangle $A \times B$, if $f(a)=1$ and $f(b)=0$, for all $(a,b) \in A \times B$.
This way, if~a~formula~$F$ computes a~function~$f$, then
it~separates~$R_f$.

For a~formula~$F$, the \emph{size}~$\L(F)$ is~defined as~the number of~leaves in~$F$.
This extends to~Boolean functions:
for $f \in \mathbb B_n$, by~$\L(f)$ we~denote the smallest size of a~formula computing~$f$.
Similarly, the \emph{depth}~$\D(F)$ is~the depth of~the tree whereas $\D(f)$~is the smallest
depth of a~formula computing~$f$.

By~$\L_{\frac{3}{4}}(f)$, we denote the smallest size of~a formula~$F$ that agrees with~$f$ on~a~$\frac{3}{4}$ fraction of~inputs, i.e.,
\[
\Pr_{x \in \{ 0, 1 \}^{n}}[F(x) = f(x)] \ge \frac{3}{4}.
\]
We say that $F$ \emph{approximates} $f$.

It~is known that formulas can be~balanced: $\D(f)=\Theta(\log \L(f))$ (see references in~\cite[Section~6.1]{DBLP:books/daglib/0028687}): this~is
proved by~showing that, for any formula~$F$, there exists
an~equivalent formula~$F'$ with $\L(F') \le \L(F)^{O(1)}$
and $\D(F') \le O(\log \L(F))$.
The following theorem further refines this:
by~allowing a~larger constant in~the depth upper bound,
one can control the size of~the resulting balanced formula.

\begin{theorem}[\cite{DBLP:journals/ipl/BonetB94}]
	\label{theorem:formula_balancing}
	For any $k \ge 2$ and any formula~$F$, there exists an~equivalent
	formula~$F'$ satisfying
    $\D(F') \leq 3 \ln 2 \cdot k \cdot \log \L(F)$ and $\L(F') \leq \L(F)^{\gamma}$,
    where $\gamma = 1 + \frac{1}{1 + \log(k - 1)}$.
\end{theorem}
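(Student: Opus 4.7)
The plan is to prove this parameterized balancing theorem (a result of Bonet and Buss) by strong induction on $\L(F)$, combining the monotonicity of de~Morgan formulas with a carefully tuned subformula substitution.

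The key structural observation is that, because the internal gates of a de~Morgan formula are only $\land$ and $\lor$, the function computed by~$F$ is monotone in the Boolean value of every subformula~$G$ of~$F$. Writing $F_b = F[G \leftarrow b]$ for the restriction obtained by replacing~$G$ with the constant~$b$, this monotonicity yields
\[
F \equiv F_0 \lor (G \land F_1),
\]
which uses only a single copy of~$G$ alongside one copy each of~$F_0$ and~$F_1$. A single decomposition step therefore has size cost $s + 2(L - s) = 2L - s$ and adds at most~$2$ to the depth, where $s = \L(G)$ and $L = \L(F)$. To choose~$G$, I would walk from the root of~$F$ into the heavier child at each step, producing a chain $L = s_0 > s_1 > \dots > s_\ell = 1$ with $s_{i+1} \ge (s_i - 1)/2$; selecting the latest index whose subformula size still exceeds $\tfrac{k-1}{k}L$, or the light child branching off at that step, lands $s$ in a narrow window around the target~$\tfrac{k-1}{k}L$.

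Applying the induction hypothesis to balance $G$, $F_0$, and~$F_1$ and recombining via the identity above yields the coupled recurrences
\[
T(L) \le T(s) + 2\,T(L - s), \qquad D(L) \le 2 + \max\bigl(D(s),\, D(L - s)\bigr).
\]
For depth, $\max(s, L - s) \le \tfrac{k-1}{k}L$ unfolds to $D(L) \le 2 \log_{k/(k-1)} L$; using $\ln(k/(k-1)) \ge 1/k$ turns this into $D(L) \le 2 k \ln 2 \cdot \log L \le 3 \ln 2 \cdot k \log L$. For size, substituting $T(L) = L^\gamma$ with $s \approx \tfrac{k-1}{k}L$ reduces the recurrence to the numerical condition $k^\gamma \ge (k-1)^\gamma + 2$. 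Setting $\gamma = 1 + 1/(1 + \log(k - 1))$, the mean value theorem applied to $x \mapsto x^\gamma$ on $[k-1, k]$ gives
\[
k^\gamma - (k-1)^\gamma \ge \gamma \cdot (k-1)^{\gamma - 1} = \frac{2 + \log(k-1)}{1 + \log(k-1)} \cdot 2^{\log(k-1)/(1 + \log(k-1))},
\]
and a short calculus check confirms this expression is at least~$2$ for every $k \ge 2$, closing the induction.

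The hard part will be the heavy-subformula selection step: the walk only guarantees that consecutive subformula sizes differ by at most a factor of two, so the target window around $\tfrac{k-1}{k}L$ may not be hit on the first attempt. The remedy I would pursue is to allow several rounds of substitution---equivalently, treating two consecutive heavy subformulas together---and amortize the extra depth and size across the induction. Verifying that the constants $3 \ln 2 \cdot k$ and $\gamma = 1 + 1/(1 + \log(k - 1))$ survive this amortization is the technically delicate piece of the proof.
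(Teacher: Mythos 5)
First, note that the paper itself does not prove this statement: it is imported verbatim from Bonet and Buss \cite{DBLP:journals/ipl/BonetB94}, so the comparison is with their proof, not with anything in this paper. Your skeleton is indeed the same monotone-decomposition scheme that underlies their argument: since negations sit at the leaves, $F \equiv F_0 \lor (G \land F_1)$ with a single copy of $G$, giving the idealized recurrences $T(L) \le T(s) + 2T(L-s)$ and $D(L) \le 2 + \max(D(s), D(L-s))$; and your mean-value-theorem verification of $k^{\gamma} \ge (k-1)^{\gamma} + 2$ for $\gamma = 1 + \frac{1}{1+\log(k-1)}$ is correct (the function $\frac{2+t}{1+t}2^{t/(1+t)}$ with $t = \log(k-1)$ equals $2$ at $t=0$, stays above $2$, and tends to $2$).

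The genuine gap is exactly the one you flag, and it is not a routine loose end. A subformula of size close to $\frac{k-1}{k}L$ need not exist at all (in a complete binary tree all subformula sizes are powers of two), so the heavy-path walk only yields $s \in \bigl(\frac{k-1}{2k}L, \frac{k-1}{k}L\bigr]$. With $s$ near the bottom of that window the restrictions have size up to about $\frac{k+1}{2k}L > L/2$, and then $s^{\gamma} + 2(L-s)^{\gamma} \approx 3(L/2)^{\gamma}$, which is below $L^{\gamma}$ only if $\gamma \ge \log 3$ --- impossible for $k \ge 3$ since your $\gamma \to 1$. The most natural repair, decomposing at the first heavy-path node $C$ whose size exceeds $\frac{k-1}{k}L$ (so the restrictions are small, of size $< L/k$) and then splitting $C$ into its two children (each of size $\le \frac{k-1}{k}L$), gives the recurrence $T(L) \le T\bigl(\tfrac{k-1}{k}L\bigr) + 3T\bigl(\tfrac{L}{k}\bigr)$, i.e.\ it needs $k^{\gamma} \ge (k-1)^{\gamma} + 3$, which already fails at $k=3$ for the stated $\gamma$. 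Moreover $k^{\gamma} \ge (k-1)^{\gamma} + 2$ is nearly tight for large $k$ (at $k=17$ the two sides are about $29.94$ and $29.86$), so there is essentially no slack in the size exponent to absorb any per-level loss; only the depth constant ($2\ln 2\cdot k$ versus the allowed $3\ln 2\cdot k$) has room. Hence the ``several rounds of substitution and amortize'' step you defer is the entire content of the Bonet--Buss theorem at these parameters: their proof carries it out via a more refined choice of break point and a stronger induction hypothesis that tracks where along the heavy path the break occurs, rather than the single-threshold recurrence you set up. As written, your proposal is a correct reduction of the theorem to an unproved (and, in its naive form, false-for-the-stated-constants) selection claim, not a proof.
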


Using a~counting argument, one can show that, with probability $1-o(1)$,
for a~random Boolean function $f \in \mathbb B_n$,
$\L(f)=\Omega(2^n / \log n)$. To~prove this, one compares the number
of~small size formulas with the number of~Boolean functions $|\mathbb B_n|=2^{2^n}$,
using the following estimate (see~\cite[Lemma~1.23]{DBLP:books/daglib/0028687}).
It ensures that the number of~formulas of~size at~most
$\frac{2^n}{100\log n}$ is~$o(|\mathbb B_n|)$:
\[(17n)^\frac{2^n}{100\log n}=2^{\frac{\log (17n)}{100\log n}2^n}\,.\]

\begin{lemma}
	For all large enough~$l$, the number of~Boolean formulas over
	$n$~variables with at~most~$l$ leaves is~at~most
	\begin{equation}
	    \label{eq:totalformulas}
	    (17n)^l.
	\end{equation}
\end{lemma}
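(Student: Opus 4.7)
The plan is to count formulas by their tree shape, internal gate labels, and leaf labels, and then sum over the number of leaves. A de Morgan formula with exactly $k$ leaves corresponds to a full binary tree with $k$ ordered leaves and $k-1$ internal nodes, each internal node carrying one of $2$ labels (namely $\lor$ or $\land$) and each leaf carrying one of $2n$ literal labels (a variable or its negation). Hence the number of such formulas is at most $C_{k-1} \cdot 2^{k-1} \cdot (2n)^{k}$, where $C_{k-1}$ denotes the $(k-1)$-st Catalan number. Using the textbook bound $C_{k-1} \le 4^{k-1}$ and multiplying the three factors, I~get at~most $\frac{1}{8}(16n)^{k}$ formulas with exactly $k$ leaves.

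To pass from ``exactly $k$ leaves'' to ``at most $l$ leaves'', I would sum a geometric series: $\sum_{k=1}^{l} \frac{(16n)^{k}}{8} \le \frac{(16n)^{l}}{4}$, which is valid for all $n \ge 1$ since the geometric ratio is then at most $1/16$. Finally, the inequality $(16n)^{l}/4 \le (17n)^{l}$ is equivalent to $(17/16)^{l} \ge 4$, which holds for all $l \ge \log 4 / \log(17/16) \approx 22$; this is precisely where the hypothesis ``for all large enough $l$'' is used.

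There is no real obstacle, as this is a~routine combinatorial counting argument. The only delicate point is the final constant trade-off between the bases $16$ and $17$, which absorbs the residual $\tfrac{1}{8}$ from the Catalan bound as well as the geometric-series slack; the same argument would yield $(cn)^{l}$ for any constant $c > 16$, and $17$ is simply chosen to make the bound convenient to state.
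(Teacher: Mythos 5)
Your proof is correct and follows essentially the same route as the paper's: bound the number of tree shapes (the paper uses ``at most $4^l$ binary trees'', you use $C_{k-1}\le 4^{k-1}$), multiply by the $2n$ literal choices per leaf and $2$ choices per gate, and then aggregate over leaf counts before absorbing the constants into the base $17$ (the paper multiplies by $l$ and needs $l\ge 71$, you sum a geometric series). One tiny inaccuracy: $(16n)^l/4 \le (17n)^l$ is equivalent to $(17/16)^l \ge 1/4$, not $\ge 4$, so your final step in fact holds for every $l$; this only makes your argument need ``large enough $l$'' less than you claim, and does not affect correctness.
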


\begin{proof}
    The number of~binary trees with $l$~leaves is at~most $4^l$. For each such tree, there are at most $(4n)^l$ ways to convert it into
    a~de~Morgan formula: there are $2n$~input literals for the leaves and two operations for each internal gate. Consequently, the total number
    of~formulas with at most $l$~leaves is at most
    \[l \cdot 4^l \cdot (4n)^l = l \cdot 16^l \cdot n^l \le (17n)^l,\]
    which is true for $l \geq 71$.
\end{proof}

We~say that $f \in \mathbb B_n$ is \emph{$\alpha$-balanced} if
\[
	\alpha \cdot 2^n \le |f^{-1}(0)|, |f^{-1}(1)| \le (1 - \alpha) \cdot 2^n
\]
i.e., $\left||f^{-1}(0)| - |f^{-1}(1)|\right| \leq (1 - 2 \alpha) 2^{n}$.

\begin{lemma}
    \label{lemma:balanced_hard}
    For all sufficiently large~$n$ and any constant $\frac{3}{8} < \alpha < \frac{1}{2}$,
    a~random function $f \in \mathbb B_n$ is~$\alpha$-balanced and $\L_{\frac{3}{4}}(f) = \Omega(\frac{2^{n}}{\log n})$, with probability $1 - o(1)$.
\end{lemma}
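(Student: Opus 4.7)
The plan is to prove the two statements ($\alpha$-balancedness and $\L_{\frac{3}{4}}$ lower bound) separately and then combine them by a union bound.

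For the balancedness claim, I would view $|f^{-1}(1)|$ as the sum of $2^n$ i.i.d.\ Bernoulli$(1/2)$ random variables (one per input). Since $\alpha < \frac{1}{2}$ is a fixed constant, $\varepsilon := \frac{1}{2} - \alpha > 0$, and a standard Chernoff (or Hoeffding) bound yields
\[
\Pr\bigl[\bigl| |f^{-1}(1)| - 2^{n-1}\bigr| > \varepsilon \cdot 2^n\bigr] \leq 2 e^{-2\varepsilon^2 \cdot 2^n},
\]
which is doubly exponentially small and in particular $o(1)$. This gives $\alpha$-balancedness with probability $1-o(1)$.

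For the approximation lower bound, I would use the counting argument suggested in the excerpt, but adapted to approximators. Set $\ell = c \cdot \frac{2^n}{\log n}$ for a small constant $c>0$ to be chosen. By~\eqref{eq:totalformulas}, the number of formulas of size at most $\ell$ is at most $(17n)^\ell = 2^{\ell\log(17n)} = 2^{(c+o(1))\cdot 2^n}$. Each such formula~$F$ computes some fixed function~$g_F \in \mathbb B_n$, and the number of functions $f$ that agree with $g_F$ on at least $\frac{3}{4}\cdot 2^n$ inputs equals the number of ways to pick the at~most $\frac{1}{4}\cdot 2^n$ disagreement points, which by~\eqref{eq:binomialestimate} is at most $2^{h(1/4)\cdot 2^n}$. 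Hence the number of functions $f$ with $\L_{\frac{3}{4}}(f) \leq \ell$ is at most
\[
 (17n)^\ell \cdot 2^{h(1/4)\cdot 2^n} \;\leq\; 2^{(c + h(1/4) + o(1))\cdot 2^n}.
\]
Since $h(1/4) = 2 - \tfrac{3}{4}\log 3 < 0.812$, choosing $c < 1 - h(1/4)$ (e.g.\ $c = 0.1$) makes this quantity $2^{(1-\Omega(1))\cdot 2^n} = o(2^{2^n}) = o(|\mathbb B_n|)$. Thus a uniformly random $f \in \mathbb B_n$ satisfies $\L_{\frac{3}{4}}(f) > \ell = \Omega(2^n/\log n)$ with probability $1-o(1)$.

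A final union bound combines the two estimates: both events hold simultaneously with probability $1-o(1)$. The only mildly delicate step is the approximation count, where one must be careful that counting \emph{functions} (not formulas) approximated by small formulas still gives an overall bound below $2^{2^n}$; I do not expect this to be a real obstacle, as the gap between $h(1/4) \approx 0.811$ and $1$ is comfortable and permits choosing a concrete constant $c$ without tight calculation.
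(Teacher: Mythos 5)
Your proposal is correct and follows essentially the same route as the paper: the $\L_{\frac{3}{4}}(f)$ lower bound is the identical counting argument (at most $(17n)^\ell$ formulas of size $\ell = \Theta(2^n/\log n)$, each approximating at most $2^{(h(1/4)+o(1))2^n}$ functions, which is a vanishing fraction of $2^{2^n}$), combined with a union bound. The only difference is cosmetic: for balancedness you invoke Hoeffding, whereas the paper bounds the binomial tail directly via the entropy estimate~\eqref{eq:binomialestimate}; both give failure probability $o(1)$, and your use of~\eqref{eq:binomialestimate} for the sum of binomial coefficients rather than a single one only costs a factor $2^n$, which is absorbed into the $o(1)$ in the exponent.
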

\begin{proof}
	For a~formula over $n$~variables, the number of~Boolean functions
	it~approximates is~at~most (by~the estimate~\eqref{eq:binomialestimate})
	\[\sum_{d = 3 \cdot 2^{n} / 4}^{2^{n}} \binom{2^{n}}{d} =  \sum_{d=0}^{2^n/4}\binom{2^n}{d} \le 2^n \binom{2^n}{2^n/4} \leq 2^n \cdot 2^{h(1/4)2^n} \, .\]
	Combining this with~\eqref{eq:totalformulas}, we~get that the number
	of~functions approximated by~formulas of~size $\beta \frac{2^n}{\log n}$
	is~at most
	\[(17n)^{\beta \frac{2^n}{\log n}}2^n 2^{h(1/4)2^n}=2^{2^n\left(\beta\frac{\log (17n)}{\log n}+h(1/4)\right)+n}.\]
    For any constant $0<\beta < 1-h(1/4)$,
    this is a~$o(1)$ fraction of~$\mathbb B_n$.

    Now, the probability that a~random~$f \in \mathbb B_n$ is~not $\alpha$-balanced (i.e.,
    $||f^{-1}(0)| - |f^{-1}(1)|| > (1 - 2 \alpha) \cdot 2^{n}$) is~at~most
	\[
		\frac{1}{2^{2^{n}}} \cdot 2 \cdot \sum_{i = 0}^{\alpha \cdot 2^{n} - 1} \binom{2^{n}}{i}  \leq \frac{1}{2^{2^{n}}} \cdot 2 \cdot \alpha \cdot 2^{n} \cdot \binom{2^{n}}{\alpha \cdot 2^{n}}  \leq 2^{2^{n}(h(\alpha) - 1) + n + 1} = o(1).
	\]
	Thus, with probability $1 - o(1)$, a~random $f \in \mathbb B_n$ is $\alpha$-balanced and hard to~approximate.
\end{proof}

\subsection{Karchmer--Wigderson Games}
Karchmer and Wigderson~\cite{DBLP:journals/siamdm/KarchmerW90} came~up with the following characterization of~Boolean formulas.
For a~Boolean function $f \in \mathbb B_n$, the \emph{Karchmer--Wigderson game} $\KW_f$
is~the following communication problem. Alice is~given $a \in f^{-1}(1)$, whereas
Bob is~given $b \in f^{-1}(0)$, and their goal is~to find an~index $i \in [n]$
such that $a_i \neq b_i$. A~\emph{communication protocol for $\KW_f$} is~a~rooted binary tree whose leaves are labeled with indices from $[n]$ and each internal node~$v$ is~labeled either
by~a~function $A_v \colon f^{-1}(1) \to \{0,1\}$ or~by~a~function $B_v \colon f^{-1}(0) \to \{0,1\}$.
For any pair $(a,b) \in f^{-1}(1) \times f^{-1}(0)$, one can reach a~leaf of~the protocol by~traversing a~path from the root to a~leaf to~determine to~which of~the two children to~proceed from a~node~$v$, one computes either $A_v(a)$ or $B_v(b)$. We~say that a~protocol \emph{solves} $\KW_f$, if~for any $(a,b) \in f^{-1}(1) \times f^{-1}(0)$, one reaches a~leaf $i \in [n]$
such that $a_i \neq b_i$. Similarly to~formulas, we~say that a~protocol separates
a~combinatorial rectangle $A \times B$, if~it works correctly for all pairs $(a,b) \in A \times B$.

Karchmer and Wigderson showed that formulas computing~$f$ and protocols solving $\KW_f$ can be~transformed (even mechanically) into one another. In~particular, the smallest number of~leaves in~the protocol solving $\KW_f$ is~equal to~$\L(f)$, whereas the smallest
depth of~a~protocol (also known as~the communication complexity of~$\KW_f$, denoted by~$\CC(\KW_f)$) is~nothing else but~$\D(f)$.
By~$\L(A \times B)$ for~a~combinatorial rectangle $A \times B$, we denote the minimum number of leaves in~a~protocol separating $A$ and~$B$.

With each node of a~protocol solving $\KW_f$, one can associate a~combinatorial rectangle
in~a~natural way. The root of~the protocol
corresponds to~$R_f$. For the two children of Alice's node~$v$ with a~rectangle $A \times B$, one associates two rectangles $A_0 \times B$ and $A_1 \times B$,
where $A_i=\{a \in A \colon A_v(a)=i\}$. This way, Alice splits the current rectangle horizontally. Similarly, when Bob speaks, he~splits the current rectangle vertically.
Each leaf of a~protocol solving
$\KW_f$ is~associated with a~\emph{monochromatic} rectangle, i.e., a~rectangle $A \times B$ such that there exists $i \in [n]$ for which $a_i \neq b_i$ for all $(a,b) \in A \times B$.

For functions
$f \in \mathbb B_m$ and $g \in \mathbb B_n$, the
\emph{strong composition} of~$\KW_f$ and~$\KW_g$, denoted as $\KW_f \circledast \KW_g$, is the following communication problem: Alice and Bob receive inputs $X \in (f \diamond g)^{-1}(1)$ and $Y \in (f \diamond g)^{-1}(0)$, respectively, and need
to~find indices $(i, j)$ such that $X_{i, j} \neq Y_{i, j}$ and $g(X_i) \neq g(Y_i)$.
We say that a~protocol \emph{strongly separates} sets $\mathcal{X} \subseteq (f \diamond g)^{-1}(1)$ and $\mathcal{Y} \subseteq (f \diamond g)^{-1}(0)$, if it~solves the strong composition $\KW_f \circledast \KW_g$ on~inputs $\mathcal{X} \times \mathcal{Y}$.

\subsection{Formal Complexity Measures} \label{section:formalcomplexity}
For $f \in \mathbb B_n$, define a~bipartite graph
$G_f(f^{-1}(1) \sqcup f^{-1}(0), E_f)$ as~follows:
\[E_f=\{\{u, v\} \colon u \in f^{-1}(1), v \in f^{-1}(0), d_H(u, v)=1\},\]
where $d_H$~is the Hamming distance. Khrapchenko~\cite{K71}
proved that, for any $f \in \mathbb B_n$, $\psi(G_f) \le \L(f)$ (recall~\eqref{eq:graphmeasure} for the definition of~$\psi(G)$).
This immediately gives a~lower bound $\L(\XOR_n) \ge n^2$.
Note the two useful properties of~$\psi(G_f)$: on~the one hand,
it~is a~lower bound to~$\L(f)$, on~the other hand,
it~is much easier to~estimate than~$\L(f)$.

Paterson~\cite[Section~8.8]{DBLP:books/teu/Wegener87} noted that Khrapchenko's approach can be~cast as~follows.
A~function
$\mu \colon \mathbb B_n \to \R_{+}$ is called a~\emph{formal complexity measure}
if~it satisfies the following two properties:
\begin{enumerate}
	\item normalization: $\mu(x_i), \mu(\overline{x_i}) \le 1$, for all $i \in [n]$,
	\item subadditivity: $\mu(f \lor g) \leq \mu(f) + \mu(g)$ and $\mu(f \land g) \le \mu(f) + \mu(g)$, for all $f, g \in \mathbb B_n$.
\end{enumerate}
Note that Khrapchenko's measure can be~defined in~this notation as $\phi(f)=\psi(G_f)$.
Its subadditivity is~shown in~\Cref{lemma:khrapchekno_subadditive}, whereas the normalization
property can be~easily seen.

It~is not difficult to~see that $\L$~itself is a~formal complexity measure.
Moreover, it~turns out that it~is the largest formal complexity measure.
\begin{lemma}[Lemma 8.1 in~\cite{DBLP:books/teu/Wegener87}]
	For any formal complexity measure $\mu \colon \mathbb B_n \to \R$ and any $f \in B_n$, $\mu(f) \le \L(f)$.
\end{lemma}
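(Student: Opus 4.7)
The plan is to proceed by induction on the size of a de~Morgan formula. Let $F$ be a formula of minimum size computing $f$, so that $\L(F) = \L(f)$. We prove by induction on the number of leaves that for every de~Morgan formula $G$ computing some function $g \in \mathbb B_n$, one has $\mu(g) \leq \L(G)$. Applying this to $F$ then yields $\mu(f) \leq \L(F) = \L(f)$, as desired.

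For the base case, a formula with a single leaf computes some literal $x_i$ or $\overline{x_i}$, and the normalization property of $\mu$ gives $\mu(x_i), \mu(\overline{x_i}) \leq 1$, which matches the single-leaf size. For the inductive step, suppose $G$ has more than one leaf. Then its root gate is either $\lor$ or $\land$, so $G = G_1 \circ G_2$ with $\circ \in \{\lor, \land\}$, where $G_1$ and $G_2$ compute functions $g_1$ and $g_2$ respectively, and $\L(G) = \L(G_1) + \L(G_2)$. By the induction hypothesis, $\mu(g_i) \leq \L(G_i)$ for $i \in \{1,2\}$. By the subadditivity property of $\mu$, we have
\[
\mu(g) = \mu(g_1 \circ g_2) \leq \mu(g_1) + \mu(g_2) \leq \L(G_1) + \L(G_2) = \L(G),
\]
completing the induction.

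There is essentially no obstacle here; the only delicate point to keep in mind is that the formal complexity measure $\mu$ is defined on Boolean functions (elements of $\mathbb B_n$) rather than on formulas, so when we write $\mu(g)$ we are referring to the function $g$ that $G$ computes, and several non-equivalent subformulas may compute the same function. The inductive bound is applied to whichever function is computed by the chosen subformula, and subadditivity is precisely the property that allows us to combine these bounds across a gate without caring about the particular syntactic representation.
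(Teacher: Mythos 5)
Your proof is correct and is exactly the standard argument for this fact (the paper simply cites Lemma 8.1 of Wegener's book, whose proof is this same induction on formula size, with normalization handling single-leaf formulas and subadditivity handling the gates). No issues.
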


\section{Proof of~the Main Result}

In~this section, we~prove the main result of~the paper.

\mainclasses*

\maintheorem*
\begin{proof}
    \Cref{lemma:balanced_hard} guarantees that for a random function $f  \colon \{ 0, 1 \}^{\log m} \to \{ 0, 1 \}$, $f$ is $0.49$-balanced and $\L_{\frac{3}{4}}(f) = \Omega(m / \log \log m)$ with probability $1 - o(1)$.
    Plugging this into \Cref{theorem:main_classes} gives the required lower bound.
\end{proof}

\subsection{Proof Overview}
We~start by~proving a~lower bound on~the size of~any protocol
solving $\KW_{\XOR_{m}} \circledast \KW_{f}$ and having a~logarithmic depth.
Then, using balancing techniques (see \Cref{theorem:formula_balancing}),
we~generalize the size lower bound to~all protocols.

Fix a~set $\mathcal Z \subseteq \{0,1\}^{\log m}$
such that $|\mathcal Z| = 0.98 m$ and $f$~is balanced on~$\mathcal Z$:
$|\mathcal X_0|=|\mathcal Y_0|=0.49m$, where
$\mathcal X_0=f^{-1}(1) \cap \mathcal Z$
and
$\mathcal Y_0=f^{-1}(0) \cap \mathcal Z$.

We~prove a~lower bound for any protocol that strongly separates $\KW_{\XOR_m} \circledast \KW_f$ on inputs $\mathcal X_T \times \mathcal Y_T$ (which are defined later). To~this end, we~associate, with nodes of~the protocol, a~graph similar to $G_{\XOR_m}$ and use Khrapchenko's measure to~track the progress of~the protocol.
A~node of the graph is associated with all inputs~$X$ having the same vector $f(X)$.
The reasoning is that, in a~natural scenario, the protocol will first solve $\XOR_m$, followed by solving~$f$, implying that the protocol does not need to distinguish between $X$ and $X'$ in the initial rounds, if $f(X)=f(X')$. We~connect two graph nodes by~an~edge
if~their vectors differ in~exactly one coordinate.

We aim to ensure that each edge in the graph has a large projection: for any two nodes connected by an edge, the elements of~blocks associated with them
cover a substantial number of inputs for the function~$f$.
There will be no small protocol capable of solving the problem within these two blocks since $f$~is hard to~approximate.
This is the rationale behind ensuring that all edges in the graph have large projections on both sides.
To achieve this, we enforce that each block that is associated with a~node shrinks by at most a~factor of two at each step of the protocol.
This process ensures that a significant number of edges in the graph will maintain large projections on both sides.

Once the Khrapchenko measure becomes sufficiently small, we can assert that $\XOR_m$ is nearly solved, and the protocol, in a sense, must now solve an instance of $\OR_d \circledast f$. Using the fact that solving each edge independently is hard, we conclude that solving an $\OR_d \circledast f$ over these edges should be as difficult as approximately $d \cdot \L_{\frac{3}{4}}(f)$.

\subsection{Proof}
Throughout this section, we~assume that $m$~is~large enough
and $f \in \mathbb B_{\log m}$ is a~fixed function that is~$0.49$-balanced.
Fix sets $\mathcal X_0 \subseteq f^{-1}(1), \mathcal Y_0 \subseteq f^{-1}(0)$ of~size $0.49 m$ and let
\begin{align}
	\mathcal X_T &= \{X \in \{0,1\}^{m \times \log m} \colon (\XOR_m \diamond f)(X) = 1 \wedge X_i \in \mathcal X_0 \sqcup \mathcal Y_0, \; \forall i \in [m]\}, \label{eq:x_0} \\
	\mathcal Y_T &= \{Y \in \{0,1\}^{m \times \log m} \colon (\XOR_m \diamond f)(Y) = 0 \wedge Y_i \in \mathcal X_0 \sqcup \mathcal Y_0, \; \forall i \in [m]\}. \label{eq:y_0}
\end{align}

Let $\alpha>0$ be a~constant and $P$~be a~protocol that strongly separates  $\mathcal X_T \times \mathcal Y_T$ and has depth at~most $\alpha \log m$.
Recall that each node~$S$ of~$P$ is~associated with a~rectangle $\mathcal X_S \times \mathcal Y_S$.
We~build a~subtree~$D$ of~$P$ having the same root and associate
a~graph $G_N$ to every node~$N$ of~$D$. The graphs~$G_N$
are built inductively from the graphs associated with the parents of~$N$
as~explained below,
but all these graphs are subsets of~the $m$-dimensional hypercube:
the set of~nodes of~each such graph is a~subset of $\{0,1\}^m$
and for each edge $\{u,v\}$ it~holds that $d_H(u,v)=1$.

For the root~$T$ of~the protocol~$P$,
the graph $G_T$ is~simply $G_{\XOR_m}$
(which~is nothing else but the $m$-dimensional hypercube):
its set of~nodes is $\{0,1\}^m$, two nodes are joined by~an~edge with
label~$i$ if~they differ in~the $i$-th coordinate.

For any node~$v$ of~the graph $G_S$, we~associate the following set of inputs called \emph{block}:
\[\mathcal{B}_S(v) = \{X \in \{0,1\}^{m \times \log m} \colon X \in \mathcal X_S \sqcup \mathcal Y_S \text{ and } f(X)=v\}.\]
We~say that an~edge $\{u,v\}$ with label~$i$ of~$G_S$ is~\emph{heavy}
if~the projection of both $\mathcal{B}_S(u)$ and $\mathcal{B}_S(v)$ onto the $i$th coordinate is dense, i.e.,
\[
|\proj_i\mathcal{B}_S(u)|, |\proj_i\mathcal{B}_S(v)| \ge \frac 3 8 m,
\] and \emph{light} otherwise.

Since the nodes of the graph $G_{S}$ form a subset of $\{0,1\}^{m}$, we can naturally divide them into two parts, as their blocks correspond to subsets of either $\mathcal X_S$ or $\mathcal Y_S$.
\begin{align*}
	A_{S} &= \{ v \in V(G_{S}) \mid \XOR_m(v) = 1\}\\
	B_{S} &= \{ v \in V(G_{S}) \mid \XOR_m(v) = 0\}
\end{align*}
For a graph $G_{S}$, we define $d_A(G_{S})$ as the average degree of the part $A_{S}$ and $d_B(G_{S})$ as the average degree of the part $B_{S}$.
We say that a graph $G_S$ is \emph{special} if
 \[
	 \min \{ d_A(G_S), d_B(G_S) \} \leq 12 \alpha \log^2 m.
\]
We will construct the tree $D$ inductively.
For a~node~$S$ in the tree~$D$, we either stop the process if $G_S$ is special, or construct the two children of~$S$ from the protocol $P$ and their graphs.
We continue building~$D$ on these two children inductively.
Hence, all graphs corresponding to internal nodes of the tree~$D$ are not special, while all graphs associated with leaves of $D$ are special.

\begin{definition}
	A graph $G_S$, associated with a node $S$ in the tree $D$, is \emph{adjusted} if all its edges are heavy and
\begin{equation}
	\deg(v) > \frac{d_A(G_S)}{2}, \; \forall v \in A_S \qquad \text{and} \qquad \deg(v) > \frac{d_B(G_S)}{2}, \; \forall v \in B_S. \label{eq:big_degree}
\end{equation}
\end{definition}
We will ensure that all graphs $G_S$ for any node $S$ in the tree $D$ are adjusted.

\begin{lemma}
    \label{lemma:block_graph_properties}
    For each node~$v$ of~the graph~$G_T$ (associated with the root~$T$ of~the protocol~$P$),
	\begin{enumerate}
        \item the degree of~$v$ is~$m$;
		\item $|\proj_i\mathcal{B}_T(v)| = 0.49 m$, for all $i \in [m]$;
		\item $|\mathcal{B}_T(v)|=\frac{(0.98 m)^{m}}{2^{m}}$.
	\end{enumerate}
\end{lemma}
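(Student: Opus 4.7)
The plan is to verify the three claims directly from the definitions, since $G_T$ is the full $m$-dimensional hypercube and the constraints defining $\mathcal B_T(v)$ act independently on each row.

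For claim~(1), recall that $G_T$ is defined to be $G_{\XOR_m}$, which is the $m$-dimensional Boolean hypercube with one edge per coordinate direction. Every vertex $v \in \{0,1\}^m$ therefore has exactly $m$ neighbors, obtained by flipping each of its $m$ coordinates in turn, so $\deg(v) = m$ trivially.

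For claim~(2), I would unpack the definition of $\mathcal B_T(v)$ and of the projection. By definition, $\mathcal B_T(v)$ is the set of matrices $X \in \{0,1\}^{m \times \log m}$ such that every row $X_j$ lies in $\mathcal X_0 \sqcup \mathcal Y_0$ and $f(X_j) = v_j$. Hence a row $X_i$ appears in some $X \in \mathcal B_T(v)$ iff $X_i \in (\mathcal X_0 \sqcup \mathcal Y_0) \cap f^{-1}(v_i)$, which equals $\mathcal X_0$ when $v_i = 1$ and $\mathcal Y_0$ when $v_i = 0$. Conversely, any such row can be extended to a matrix in $\mathcal B_T(v)$, because for each other index $j \neq i$ at least one valid row exists (namely anything in $\mathcal X_0$ or $\mathcal Y_0$ according to $v_j$). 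Since $|\mathcal X_0| = |\mathcal Y_0| = 0.49 m$ by the choice of $\mathcal X_0, \mathcal Y_0$, we get $|\proj_i \mathcal B_T(v)| = 0.49 m$ in either case.

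For claim~(3), the same argument shows that the rows of $X \in \mathcal B_T(v)$ are chosen independently: for each $i \in [m]$, $X_i$ ranges over a set of size $0.49 m$ determined only by $v_i$. Therefore
\[
|\mathcal B_T(v)| = \prod_{i=1}^{m} |\proj_i \mathcal B_T(v)| = (0.49 m)^m = \left(\frac{0.98 m}{2}\right)^m = \frac{(0.98 m)^m}{2^m}.
\]
There is no real obstacle here; the lemma is a bookkeeping check that the initial graph and its blocks have the expected parameters, which the subsequent induction will need as its base case.
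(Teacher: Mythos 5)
Your proposal is correct and takes essentially the same route as the paper: direct verification from the definitions, with the hypercube structure giving part (1) and the balancedness of $f$ on $\mathcal X_0 \sqcup \mathcal Y_0$ giving part (2). The only cosmetic difference is in part (3), where you compute $|\mathcal B_T(v)| = (0.49m)^m$ directly from the product structure of the block, while the paper gets the same number by combining the per-block upper bound $(0.49m)^m$ with the fact that the $2^m$ blocks partition the $(0.98m)^m$ inputs; both arguments are valid.
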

\begin{proof}
    Nodes of $G_T$ are $m$-dimensional binary vectors, hence $\deg(v)=m$.

    To~prove the second property,
    recall that $f$~is balanced on~$\mathcal X_0 \sqcup \mathcal Y_0$.
    If $v_i=1$ (or $v_i=0$), for some $i \in [m]$,
    the $i$-th projection
    can take any value from $\mathcal X_0$ ($\mathcal Y_0$, respectively).
    Hence, $|\proj_i\mathcal{B}_T(v)| = 0.49 m$.

    Finally, to~prove the third property,
    note the $G_T$~has $2^m$~nodes
    and for each vertex $v$ the size of the block $\mathcal B_T(v)$ is~at~most $(0.49m)^m$.
    Therefore, since each input from $\mathcal X_T \sqcup \mathcal Y_T$ belongs to~exactly one block that is associated with a~node from~$G_T$ and $|\mathcal X_T \sqcup \mathcal Y_T| =  |\mathcal X_0 \sqcup \mathcal Y_0|^{m} = (0.98 m)^{m}$,
    $|\mathcal{B}_T(v)|=\frac{(0.98 m)^m}{2^m}$.
\end{proof}

\Cref{lemma:block_graph_properties} ensures that the graph $G_{T}$ is adjusted and not special, thus the~root $T$ has two children.
Using the function $\mathcal B$,
we~show how to construct an intermediate graph~$H_N$ for some child of a node $S$ in the tree  $D$ and then we apply some cleanup procedures for the graph $H_N$ to construct a graph $G_N$.
Recall that each step of~$P$ partitions the set of~either Alice's or~Bob's inputs into two parts.
Let $G_{S}$ be a graph for some node $S$ of the protocol $P$ that is associated with a rectangle  $\mathcal  X_S \times \mathcal Y_S$ and assume, without loss of generality, that it is Alice's turn.
Therefore, graph $G_S$ is not special, otherwise we will stop the building process of the subtree of $S$.
Let $S_L$ be the left child of $S$ in the protocol  $P$ and  $S_R$ be the right child.
We add the same children of the node $S$ in the tree $D$.
Then, we put $v$ from  $B_{S}$ into both $H_{S_L}$ and $H_{S_R}$ (since the block $\mathcal B_S(v)$ has not changed).
For each node $v \in A_{S}$ we decide in which of the two graphs we will put it.
The block $\mathcal B_S(v)$ is also split into two: $\mathcal{B}_{S_L}(v)$ and $\mathcal{B}_{S_R}(v)$, corresponding to the two ways of the protocol.
We assign $v$ to the left graph $H_{S_L}$ if $2 \cdot |\mathcal{B}_{S_L}(v)| \geq |\mathcal{B}_S(v)|$, and to the right graph $H_{S_R}$ if $2 \cdot |\mathcal{B}_{S_R}(v)| > |\mathcal{B}_S(v)|$.
An edge $\{u, v\}$ from the edges of $G_{S}$ goes to $H_{S_L}$ if and only if both $u$ and $v$ are assigned to $H_{S_L}$. The same rule applies for edges in $H_{S_R}$.
This approach ensures that the size of each block $\mathcal B_S(v)$ shrinks by at most a factor of two when transitioning from a parent to a child in the tree $D$.
Then, the graphs $G_{S_L}$ and $G_{S_R}$ will be built using graphs $H_{S_L}$ and $H_{S_R}$, respectively.

The idea of the structure of the graph $G_S$ arises from Khrapchenko's graph, so we will use the same measure:
\[
  \psi(G_{S}) = d_A(G_{S}) \cdot d_B(G_{S}).
\]
\Cref{lemma:khrapchekno_subadditive} states that $\psi$ is subadditive.

After obtaining the graph~$H_C$ for a~node~$C$ of the tree~$D$, we make our first cleanup
by~deleting all light edges: let $H_C'$ be a~graph resulting from $H_C$ by~removing
all its light edges. The next lemma shows that
this does not drop the measure $\psi$ too much.

\begin{lemma}
    \label{lemma:degree_changes}
	\[
		\psi(H_{C}') \geq \psi(H_{C}) \left(1 - \frac{1}{\log m}\right).
	\]
\end{lemma}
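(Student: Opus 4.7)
My plan is to show that $H_C'$ differs from $H_C$ by only a small fraction of its edges, and then transfer this into a ratio bound on $\psi$ via the identity $\psi(H) = |E(H)|^{2}/(|A(H)|\cdot|B(H)|)$, which gives $\psi(H_C') = \psi(H_C)(1 - L/|E(H_C)|)^2$, where $L$ is the number of light edges.

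First, I will apply \Cref{lemma:sparse_projections_bound} to each block $\mathcal{B}_C(v)$. The set $\mathcal{B}_C(v)$ is $0.49$-bounded, since each row of a matrix in it lies in $\mathcal{X}_0$ or $\mathcal{Y}_0$, each of size $0.49m$; and because blocks shrink by at most a factor of $2$ at every step of $D$ and the protocol $P$ (hence $D$) has depth at most $\alpha\log m$, one has $|\mathcal{B}_C(v)| \geq (0.49m)^m/2^{\alpha\log m} = 0.49^m m^m / m^{\alpha}$. The lemma (with $\alpha=0.49$ and $k=\alpha\log m$) then gives at most $c_1 \alpha\log m$ sparse coordinates of $\mathcal{B}_C(v)$, where $c_1 = 1/\log(8\cdot 0.49/3) < 3$.

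The key observation is that the adjustment of the \emph{parent} graph $G_S$ forces the \emph{unchanged} side of every edge of $H_C$ to be already dense, so only one side can create light edges. Suppose it was Alice's turn at $S$ (the other case is symmetric). Then $B_C = B_S$ and $\mathcal{B}_C(v) = \mathcal{B}_S(v)$ for every $v \in B_C$; since $G_S$ is adjusted, every edge of $G_S$ is heavy, so $|\proj_i \mathcal{B}_S(v)| \geq \tfrac{3}{8}m$ along every label $i$ incident to $v$. Hence an edge $\{u,v\}$ of $H_C$ with label $i$ is light only if $i$ is sparse for $\mathcal{B}_C(u)$, which yields
\[
    L \leq \sum_{u \in A_C}\bigl(\text{number of sparse coordinates of }\mathcal{B}_C(u)\bigr) \leq |A_C| \cdot c_1\alpha\log m.
\]

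To turn this into a fraction of $|E(H_C)|$, I use that $\deg_{H_C}(u) = \deg_{G_S}(u)$ for every $u \in A_C$ (Alice's step preserves every $G_S$-edge incident to an $A$-vertex placed on the chosen side), and then the adjustment and non-speciality of $G_S$ together give $\deg_{G_S}(u) > d_A(G_S)/2 > 6\alpha\log^2 m$. Therefore $|E(H_C)| > |A_C| \cdot 6\alpha\log^2 m$, and
\[
    \frac{L}{|E(H_C)|} \leq \frac{c_1}{6\log m} < \frac{1}{2\log m},
\]
so
\[
    \psi(H_C') \;=\; \psi(H_C)\left(1 - \frac{L}{|E(H_C)|}\right)^{\!2} \;\geq\; \psi(H_C)\left(1 - \frac{1}{2\log m}\right)^{\!2} \;\geq\; \psi(H_C)\left(1 - \frac{1}{\log m}\right).
\]
The only delicate point is the observation in the third paragraph: it is the adjustment of the parent $G_S$, not of $H_C$ itself (which need not yet be adjusted), that supplies the dense $B$-side of every $H_C$-edge; without it one is forced to double-count sparseness over both parts and cannot match the $d_A$ lower bound coming from non-speciality.
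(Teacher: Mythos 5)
Your proof is correct and follows essentially the same route as the paper's: the same use of \Cref{lemma:sparse_projections_bound} to bound sparse coordinates per block by $3\alpha\log m$, the same observation that only the changed side can create light edges (since the parent $G_S$ is adjusted and the other side's blocks are unchanged), and the same $6\alpha\log^2 m$ degree lower bound from adjustment plus non-speciality of $G_S$. The only difference is cosmetic: you package the final estimate via the exact identity $\psi(H_C')=\psi(H_C)(1-L/|E|)^2$, whereas the paper bounds $d_A(H_C')$ and $d_B(H_C')$ separately and multiplies, and your explicit remark that $\deg_{H_C}(u)=\deg_{G_S}(u)$ for $u\in A_C$ cleanly justifies the step the paper states tersely as $d_A(H_C)\geq d_A(G_S)/2$.
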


\begin{proof}
    Let $S$~be the parent of~$C$ in~$D$. Since $S$~is not a~leaf, we have that $\min\{d_A(G_S), d_B(G_S)\} > 12 \alpha \log^2 m$ and the degree of~every node in $G_S$
    is~at~least half of the average degree of~its part.
	Without loss of generality, assume that inputs were deleted from $\mathcal X_S$, and therefore $d_A(H_C) \geq \frac{d_A(G_S)}{2} > 6 \alpha \log^2 m$.

	An~edge $\{u, v\}$ can become light
    because of~only one of~its endpoints,
    because the blocks on~the other side remain unchanged.
	From \Cref{lemma:block_graph_properties}, we know that the initial size of each block is $(0.49 m)^m$, and after each step of the protocol, the size of a~block shrinks by at most a~factor of two.
	Hence, for any node~$v$, the size of its block $\mathcal B_S(v)$ is at least $\frac{(0.49 m)^{m}}{2^{\alpha \log m}}$, because the protocol depth is~bounded by $\alpha \log m$.
	Hence, we can bound the number of~light edges incident to~$v$ by $3 \alpha \log m$ using \Cref{lemma:sparse_projections_bound} (since $\log^{-1} (8 \cdot 0.49 / 3) < 3$).
	Therefore,
	\[
		d_A(H_C') \geq d_A(H_C) - 3 \alpha \log m.
	\]

	Now, consider $d_B(H_C')$.
	Let $E_C$ be the set of edges in $H_{C}$, whereas $A_C$~and~$B_C$
    be its parts of~nodes.
    Then,
	\[
		d_B(H_C') \geq \frac{E_C - |A_C| \cdot 3 \alpha \log m}{|B_C|} = d_B(H_C) - 3 \alpha \log m \frac{|A_C|}{|B_C|}.
	\]
	Hence,
	\begin{align*}
		\psi(H_C') = d_A(H_C') d_B(H_C') &\geq \left(d_A(H_C) - 3 \alpha \log m\right) \left(d_B(H_C) - 3 \alpha \log m \frac{|A_C|}{|B_C|}\right) \\
		&\geq \psi(H_C) - 3 \alpha d_B(H_C) \log m - \frac{3 \alpha |A_C| d_A(H_C) \log m}{|B_C|} \\
		&= \psi(H_C)\left(1 - \frac{3 \alpha \log m}{d_A(H_C)} - \frac{3 \alpha |A_C| \log m}{|E_C|}\right) \\
		&= \psi(H_C)\left(1 - \frac{6 \alpha \log m}{d_A(H_C)}\right) \\
		&> \psi(H_C)\left(1 - \frac{6 \alpha \log m}{6 \alpha \log^2 m}\right) = \psi(H_C)\left(1 - \frac{1}{\log m}\right).
	\end{align*}
\end{proof}

The next lemma shows how to~construct an~adjusted graph~$G_C$,
from the intermediate graph~$H_C'$.

\begin{lemma}
    \label{lemma:balancing_graph}
    There exists a~subgraph $G_C$ of~the graph~$H_C'$
	such that $G_C$ is adjusted and $\psi(G_{C}) \geq \psi(H_{C}) \left(1 - \frac{1}{\log m}\right)$.
\end{lemma}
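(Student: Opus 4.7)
The plan is to obtain $G_C$ from $H_{C}'$ by an iterative ``peeling'' procedure that removes vertices of low degree, invoking \Cref{lemma:minimal_degree_balancing} at each step to ensure $\psi$ never drops. Since \Cref{lemma:degree_changes} already gives $\psi(H_C') \geq \psi(H_C)(1 - 1/\log m)$ and every edge of $H_C'$ is heavy, it suffices to pass to an induced subgraph of $H_C'$ that satisfies the degree condition \eqref{eq:big_degree} while not decreasing $\psi$.

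The first observation is that heaviness of an edge $\{u,v\}$ is a property of the projections $\proj_i \mathcal{B}_S(u)$ and $\proj_i \mathcal{B}_S(v)$ of the associated blocks, which depend only on the parent node $S$ in $D$ and are unaffected by deleting vertices from the graph. Therefore, any induced subgraph of $H_C'$ inherits the property that all of its edges are heavy, and I only need to enforce the degree condition.

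Now I would run the following procedure: while some vertex $v$ in the current graph satisfies $\deg(v) \le \avgdeg/2$ in its part, remove $v$ along with its incident edges. A single step of this procedure is exactly what \Cref{lemma:minimal_degree_balancing} analyzes (applied to whichever part $v$ belongs to, by symmetry), so the step cannot decrease $\psi$ and does not decrease the average degree of the part from which $v$ was removed. Since the vertex set is finite and strictly shrinks with each step, the procedure terminates; let $G_C$ denote the resulting graph. By construction, every remaining vertex has degree strictly greater than half the average of its part, so together with the inherited heaviness of all edges, $G_C$ is adjusted, and chaining the inequalities yields $\psi(G_C) \ge \psi(H_C') \ge \psi(H_C)(1 - 1/\log m)$.

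The main point requiring care is that the peeling procedure must not empty one of the two sides of the bipartition, since otherwise $\psi$ would be undefined and \Cref{lemma:minimal_degree_balancing} would not apply. This cannot happen: at any intermediate stage with $\psi > 0$, if some part, say $A$, were reduced to a single vertex $v$ with $\deg(v) \le \avgdeg(A)/2 = \deg(v)/2$, then $\deg(v) = 0$, which would force $\psi = 0$ in contradiction with the monotonicity of $\psi$ along the procedure (starting from $\psi(H_C') > 0$, which in turn holds because $H_C'$ inherits a positive measure from $H_C$ via \Cref{lemma:degree_changes}). Hence the procedure halts on a valid bipartite graph and produces the desired adjusted subgraph.
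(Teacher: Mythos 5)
Your proposal is correct and follows essentially the same route as the paper: iteratively peel off vertices violating the degree condition~\eqref{eq:big_degree}, invoke \Cref{lemma:minimal_degree_balancing} to see that $\psi$ never drops, note that heaviness of edges is unaffected by vertex deletions, and chain with \Cref{lemma:degree_changes}. Your extra argument that the peeling cannot empty a part (since the last vertex could only be removed if it had degree zero, forcing $\psi=0$ against monotonicity) is a welcome refinement of the paper's terse ``this process is clearly finite.''
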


\begin{proof}
    To~get $G_C$, we~keep removing nodes from $H_C'$ until it~satisfies~\eqref{eq:big_degree}.
	If~\eqref{eq:big_degree} is violated, there exists, without loss of~generality, a~node $v \in A_{C}$ such that $\deg(v) \leq \frac{d_A(G_C)}{2}$.
	Let $G_C' = G_C \setminus \{v\}$.
	\Cref{lemma:minimal_degree_balancing} guarantees that this does not decrease the measure.
    This process is~clearly finite.
\end{proof}

This way, we~construct the graph~$G_C$ for the node~$C$.
If~$C$ is~not special, we~continue expanding the subtree rooted at~$C$.
Recall also that, for each internal node~$S$
of~the tree~$D$, whose children are $S_L$ and $S_R$, the following holds:
\[
  \psi(G_S) \leq \psi(H_{S_L}) + \psi(H_{S_L}).
\]
Hence, combining it with \Cref{lemma:balancing_graph} we have:
\begin{equation}
    \label{eq:semi_subadditivity}
    \psi(G_S) \left(1 - \frac{1}{\log m}\right) \leq \psi(G_{S_L}) + \psi(G_{S_R}).
\end{equation}
On~the other hand, if~$S$~is~special,
we~will use the following two lemmas to~argue that
strongly separating $\mathcal{X}_S \times \mathcal{Y}_S$
is~still difficult.

\begin{lemma}
    \label{lemma:changing_measure}
	Let~$S$ be a node of the tree~$D$ such that it has a~node~$v \in G_S$
    having $d$~adjacent edges.
    Then,
    any protocol that strongly separates $\mathcal{X}_S$~and~$\mathcal{Y}_S$ has at~least $\Omega\left(d \cdot \L_{\frac{3}{4}}(f)\right)$ leaves.
\end{lemma}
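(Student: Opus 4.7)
The plan is to exploit the local structure of $G_S$ around $v$ to embed $d$ essentially independent copies of $\KW_f$ on large sub-rectangles, each of which is already at least as hard as approximating $f$ to accuracy $\frac{3}{4}$. Assume without loss of generality that $v \in A_S$, so $\mathcal{B}_S(v) \subseteq \mathcal{X}_S$, and let $u_1,\dots,u_d \in B_S$ be its neighbors in $G_S$ with edge labels $i_1,\dots,i_d$. Since $G_S$ is a subgraph of the hypercube, the labels $i_k$ are pairwise distinct. Fix any protocol $P$ that strongly separates $\mathcal{X}_S \times \mathcal{Y}_S$, and for each $k \in [d]$ focus on the sub-rectangle $\mathcal{B}_S(v) \times \mathcal{B}_S(u_k)$. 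Because $v$ and $u_k$ differ only in coordinate $i_k$, any $(X,Y)$ in this sub-rectangle satisfies $f(X_j)=f(Y_j)$ for every $j \neq i_k$, so the only valid strong-composition answer has first coordinate $i_k$; consequently every leaf of $P$ reached by such an $(X,Y)$ is labeled $(i_k, \cdot)$.

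Next, project to row $i_k$: set $A_k = \proj_{i_k} \mathcal{B}_S(v) \subseteq f^{-1}(1)$ and $B_k = \proj_{i_k} \mathcal{B}_S(u_k) \subseteq f^{-1}(0)$. Because $G_S$ is adjusted, all of its edges are heavy, which gives $|A_k|, |B_k| \ge \frac{3}{8} m$. Simulating $P$ on the sub-rectangle using arbitrary canonical lifts $x \mapsto X \in \mathcal{B}_S(v)$ and $y \mapsto Y \in \mathcal{B}_S(u_k)$ (such lifts exist by definition of the projections) produces a protocol that separates $A_k \times B_k$ in the sense of $\KW_f$, whose size is bounded by the number $N_k$ of leaves of $P$ labeled $(i_k, \cdot)$ that are reached by some input of this sub-rectangle. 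By the Karchmer--Wigderson correspondence, this in turn yields a formula $F_k$ with at most $N_k$ leaves that separates $A_k \times B_k$. Since $A_k$ and $B_k$ are disjoint and $|A_k|+|B_k| \ge \frac{3}{4} m$, the formula $F_k$ agrees with $f$ on at least a $\frac{3}{4}$ fraction of all $m$ inputs, so $N_k \ge \L(F_k) \ge \L_{\frac{3}{4}}(f)$.

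Finally, because the labels $i_1, \dots, i_d$ are pairwise distinct, the leaves counted by different $N_k$ are pairwise disjoint subsets of the leaves of $P$, and therefore
\[
\L(P) \;\ge\; \sum_{k=1}^{d} N_k \;\ge\; d \cdot \L_{\frac{3}{4}}(f),
\]
which is the claimed bound. The main obstacle is the bookkeeping in the second step: one must check carefully that projecting $P$ to row $i_k$ on the sub-rectangle really yields a legitimate $\KW_f$-style protocol (equivalently, a formula separating $A_k \times B_k$) whose leaf count is correctly captured by $N_k$, and that the sets of leaves contributing to different $N_k$ truly come from disjoint parts of $P$. The key density lower bound $|A_k|, |B_k| \ge \frac{3}{8} m$, which is what makes $F_k$ a valid $\frac{3}{4}$-approximation of $f$, is then immediate from the heaviness of edges in the adjusted graph $G_S$.
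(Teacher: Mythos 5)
Your proof is correct, and it reaches the paper's conclusion by a more direct route than the one the paper takes. The paper packages the argument as a formal complexity measure: it defines $\xi(\mathcal X \times \mathcal Y)=\sum_{k} \L\bigl(\proj_{i_k}(\mathcal X \cap \mathcal B_S(v)) \times \proj_{i_k}(\mathcal Y \cap \mathcal B_S(u_k))\bigr)$, proves it is subadditive under protocol splits and at most $1$ at every leaf of a strongly separating protocol (using exactly your observation that distinct edge labels force the answer row, so at a leaf at most one term survives), and then evaluates $\xi$ at the root using heaviness. You instead partition the leaves of $P$ by the first coordinate of their label, and for each neighbor $u_k$ restrict $P$ to $\mathcal B_S(v)\times\mathcal B_S(u_k)$, lift/project to row $i_k$, and prune unreachable subtrees to obtain a genuine protocol (equivalently, via Karchmer--Wigderson, a formula) separating $\proj_{i_k}\mathcal B_S(v)\times\proj_{i_k}\mathcal B_S(u_k)$ with at most $N_k$ leaves; the heaviness of the edge then gives $N_k \ge \L_{\frac{3}{4}}(f)$, and distinctness of the labels $i_1,\dots,i_d$ makes the $d$ leaf sets disjoint. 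The two bookkeeping points you flagged do go through: the pruning argument (contract any internal node one of whose subtrees contains no reachable leaf) yields a correct protocol whose leaves are all reachable, and the projection is a legitimate two-party protocol since Alice's and Bob's moves depend only on their own lifts. One cosmetic fix: $\proj_{i_k}\mathcal B_S(v)\subseteq f^{-1}(1)$ only when $v_{i_k}=1$; in the other case the extracted formula agrees with $\neg f$, and you should pass to its negation (same leaf count for de~Morgan formulas) before invoking $\L_{\frac{3}{4}}(f)$. What each approach buys: the paper's measure formulation fits its overall narrative of tracking progress by subadditive measures and handles the induction over the protocol tree in one clean statement, while your leaf-partition argument is more elementary, avoids introducing $\xi$, and even yields the bound $d\cdot\L_{\frac{3}{4}}(f)$ without the implicit constant.
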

\begin{proof}
    Consider the~subgraph of~$G_{S}$
    induced by~$v$ and its neighbors $u_1, \dotsc, u_d$ connected to~$v$.
	Denote by~$l_i$ the label of~the edge $\{v, u_i\}$.
    Define a~measure $\xi$ on subrectangles of $\mathcal X_S \times \mathcal Y_S$:
    \[
        \label{eq:xi}
        \xi(\mathcal X \times \mathcal Y) = \sum_{i=1}^{d} \L\left(\proj_{l_i}\mathcal{B} \times \proj_{l_i}\mathcal{B}_i\right),
    \]
	where $\mathcal X \subseteq \mathcal X_S, \, \mathcal Y \subseteq \mathcal Y_S$, $\mathcal B = \mathcal X \cap \mathcal B_S(v)$ and $\mathcal B_i = \mathcal Y \cap \mathcal B_S(u_i)$, for all $i \in [d]$.
	By~$\KW(A \times B)$, for~any~$A \cap B = \varnothing$, we~denote a~Karchmer-Wigderson communication game where Alice~gets $a \in A$, Bob~gets $b \in B$, and they~need to~find $i \colon a_i \neq b_i$.
    We prove that any protocol strongly separating $\mathcal{X}_S \times \mathcal{Y}_S$ requires at~least $\xi(\mathcal X_S \times \mathcal Y_S)$ leaves.

  It is easy to see that $\xi$ is subadditive, being a~sum of subadditive measures:
  if $\mathcal X = \mathcal {X'} \sqcup \mathcal {X''}$, then $\xi(\mathcal X \times \mathcal Y) \leq \xi(\mathcal {X'} \times \mathcal Y) + \xi(\mathcal {X''} \times \mathcal Y)$ and the same applies when we split $\mathcal Y$.
  Namely, let $\mathcal{Y} = \mathcal{Y'} \sqcup \mathcal{Y''}$, $\mathcal{B}_i = \mathcal{Y'} \cap \mathcal{B}_S(u_i)$, and $\mathcal{B}_i'' = \mathcal{Y''} \cap \mathcal{B}_S(u_i)$.
  Then,
  \begin{align*}
	  \xi(\mathcal{X} \times \mathcal{Y'} \sqcup \mathcal{Y''}) &= \sum_{i=1}^{d} \L\left(\proj_{l_i}\mathcal{B} \times \proj_{l_i} \mathcal{B}_i' \sqcup \mathcal{B}_i'' \right) \\
											&\le \sum_{i=1}^{d} \L\left(\proj_{l_i}\mathcal{B} \times \proj_{l_i}\mathcal{B}_i'\right) + \sum_{i=1}^{d} \L\left(\proj_{l_i}\mathcal{B} \times \proj_{l_i}\mathcal{B}_i''\right) \\
											&= \xi(\mathcal{X} \times \mathcal{Y}') + \xi(\mathcal{X} \times \mathcal{Y}'')
  .\end{align*}

    Consider a~protocol~$P'$ strongly separating $\mathcal X_S \times \mathcal Y_S$
	and its leaf~$L$ associated with a~rectangle of~inputs $\mathcal {X'}_L \times \mathcal  {Y'}_L$.
	We~show that $\xi(\mathcal {X'}_L \times \mathcal {Y'}_L) \le 1$.
  Since $L$~is a~leaf, there exists $i, j$ such that for each $X \in \mathcal{X'}_L$ and $Y \in \mathcal{Y'}_L$:
   \[
	   X_{i, j} \neq Y_{i, j} \quad \text{and} \quad f(X_i) \neq f(Y_i).
  \]
  Let $k$~be such that $\mathcal{B}_k \neq \varnothing$ (if all $\mathcal{B}_t$ are empty, then $\xi = 0$).
  Then, $\mathcal{B}(u_t)=\varnothing$, for all $t \neq k$, as~otherwise there would~be
  no~$i$ such that $f(X_i) \neq f(Y_i)$ for all $(X, Y) \in \mathcal{X'}_L \times \mathcal{Y'}_L$, since $u_k$ differs from $v$ in the position $l_k$, and $u_t$ differs from $v$ in the position $l_t$ and $l_k \neq l_t$.
  Thus, if $\xi(\mathcal {X'}_L \times \mathcal {Y'}_L) > 1$, then $\L(\proj_{l_k}\mathcal{B} \times \proj_{l_k}\mathcal{B}_k) > 1$, which contradicts to~the existence of a~pair $(i, j)$.

  Thus, $\xi$ is normal (has the value at~most~$1$ for any leaf of~any protocol that strongly separates $\mathcal{X}_S \times \mathcal{Y}_S$) and subadditive. Hence,
  its value for the whole protocol~$P'$ is a~lower bound on~the size of~$P'$.
  Thus, it~remains to~estimate $\xi$ for~$P'$.

  Since all $d$~edges are heavy, we have:
   \[
    |\proj_{l_i}\mathcal{B}_S(v)| + |\proj_{l_i}\mathcal{B}_S(u_i)| \geq \frac{3}{4}m, \quad \forall i \in [d].
  \]

  Hence, \[\L(\proj_{l_i}\mathcal{B}_S(v) \times \proj_{l_i}\mathcal{B}_S(u_i)) = \Omega\left(\L_{\frac{3}{4}}(f)\right),\]
  for all $i \in [d]$. Summing over all $i \in [d]$, gives the desired lower bound.

\end{proof}

\begin{lemma}
    \label{lemma:base_andreev}
    For a~special node~$S$ of~the tree~$D$,
	the number of~leaves in~any protocol strongly separating
    $\mathcal X_S \times \mathcal Y_S$ is
	\[
	    \Omega \left(\frac{\psi(G_S) \cdot \L_{\frac{3}{4}}(f)}{\log^2 m}\right).
	\]
\end{lemma}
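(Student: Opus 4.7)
The plan is to derive this lemma as a rather direct combination of the definition of \emph{special}, the \emph{adjustedness} of $G_S$, and the per-vertex lower bound provided by \Cref{lemma:changing_measure}. The basic observation is that since $S$ is special, one of the two average degrees in $G_S$ is small, and therefore the other is forced (via $\psi(G_S) = d_A(G_S) \cdot d_B(G_S)$) to be at least $\psi(G_S)/(12 \alpha \log^2 m)$. Adjustedness then guarantees the existence of an actual vertex whose degree is comparable to that large average degree, and \Cref{lemma:changing_measure} converts degree into a leaf lower bound.

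In detail, I would first assume without loss of generality that $d_A(G_S) \le 12 \alpha \log^2 m$, the other case being symmetric. Then
\[
d_B(G_S) \;=\; \frac{\psi(G_S)}{d_A(G_S)} \;\ge\; \frac{\psi(G_S)}{12 \alpha \log^2 m}.
\]
Since $G_S$ is adjusted, every vertex $v \in B_S$ has degree strictly greater than $d_B(G_S)/2$. Pick any such vertex $v$; it has at least $d = \Omega(\psi(G_S)/\log^2 m)$ incident edges, all of which are heavy by adjustedness.

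Now I apply \Cref{lemma:changing_measure} to this vertex $v$: any protocol strongly separating $\mathcal X_S \times \mathcal Y_S$ must have at least
\[
\Omega\bigl(d \cdot \L_{\tfrac{3}{4}}(f)\bigr) \;=\; \Omega\!\left(\frac{\psi(G_S) \cdot \L_{\tfrac{3}{4}}(f)}{\log^2 m}\right)
\]
leaves, which is exactly the desired bound. A small thing to double-check is that $B_S$ is nonempty and that the constant $12\alpha$ absorbs cleanly into the $\Omega$-notation, but both are immediate from the definitions.

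The main obstacle is really just making sure the ``without loss of generality'' step is legitimate: the statement of \Cref{lemma:changing_measure} is symmetric in the two sides (Alice/Bob play analogous roles and the projections of both block sides appear symmetrically in its proof via heaviness), so nothing goes wrong if we instead need to pick the vertex $v$ from $A_S$. Beyond this, the proof is a one-line consequence of the two preceding lemmas and the arithmetic $\psi = d_A \cdot d_B$; no further structural argument about the protocol tree $D$ is needed here, since all the work was already done in constructing the adjusted graph $G_S$.
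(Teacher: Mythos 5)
Your proposal is correct and follows essentially the same argument as the paper: use specialness to bound one average degree by $12\alpha\log^2 m$, deduce via $\psi = d_A \cdot d_B$ that the other side has average degree $\Omega(\psi(G_S)/\log^2 m)$, pick a vertex of at least (half) that degree on that side, and invoke \Cref{lemma:changing_measure}. The only cosmetic difference is that the paper takes a vertex of degree at least the average (which always exists), whereas you use adjustedness to get degree at least half the average; the constant is absorbed into the $\Omega$ either way.
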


\begin{proof}
    Assume, without loss of~generality, that
	\[
		d_A(G_S) \geq d_B(G_S) \quad \text{and} \quad d_B(G_S) \leq 12 \alpha \log^2 m.
	\]
    Applying \Cref{lemma:changing_measure}
    to a~node of~degree at~least $d_A(G_S)$,
    we~get that
    the number of leaves is at least
	\[
	    \Omega\left(d_A(G_S) \cdot \L_{\frac{3}{4}}(f)\right) = \Omega\left(\frac{\psi(G_S)}{d_B(G_S)} \cdot \L_{\frac{3}{4}}(f)\right) = \Omega \left(\frac{\psi(G_S) \cdot \L_{\frac{3}{4}}(f)}{\log^2 m}\right).
	\]
\end{proof}

At~this point, everything is~ready to~lower bound the size of~any protocol
of~logarithmic depth.

\begin{theorem}
    \label{theorem:andreev_bound_l_k}
	The size of the protocol~$P$ (strongly separating $\mathcal X_T \times \mathcal Y_T$)~is
	\[
	    \Omega\left(\frac{m^2 \cdot \L_{\frac{3}{4}}(f)}{\log^2 m} \left(1 - \frac{1}{\log m}\right)^{\alpha \log m}\right).
	\]
\end{theorem}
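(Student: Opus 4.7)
The plan is to combine the semi-subadditivity~\eqref{eq:semi_subadditivity} with the per-leaf hardness bound in~\Cref{lemma:base_andreev}, and read off $\L(P)$ as~a~sum of sub-protocol sizes indexed by~the leaves of~$D$. Every leaf of~$P$ lies below exactly one leaf of~$D$, so
\[
  \L(P) \;\geq\; \sum_{S \in \mathrm{leaves}(D)} \L(P_S),
\]
where $P_S$~is~the sub-protocol of~$P$ rooted at~$S$. Since $P_S$ strongly separates $\mathcal X_S \times \mathcal Y_S$ and $G_S$~is special, \Cref{lemma:base_andreev} bounds each $\L(P_S)$ from below by~$\Omega(\psi(G_S) \cdot \L_{\frac{3}{4}}(f)/\log^2 m)$.

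It~remains to~lower-bound $\sum_{S \in \mathrm{leaves}(D)} \psi(G_S)$. First, I~would observe that the root satisfies $\psi(G_T) = m^2$: by~\Cref{lemma:block_graph_properties} every vertex of~$G_T$ has degree~$m$, so $d_A(G_T) = d_B(G_T) = m$. Then, by~induction on~the depth of~the subtree of~$D$ rooted at~a~node~$S$, I~would show that
\[
  \psi(G_S)\left(1 - \tfrac{1}{\log m}\right)^{d} \;\leq\; \sum_{L \in \mathrm{leaves}(D_S)} \psi(G_L),
\]
where $d$~bounds the depth of~the subtree $D_S$. The base case ($S$ a~leaf) is~trivial, and the inductive step is~just~\eqref{eq:semi_subadditivity} followed by~the inductive hypothesis applied to~both children. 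Since $P$, and hence~$D$, has depth at~most $\alpha \log m$, applying the claim at~the root gives $\sum_{S \in \mathrm{leaves}(D)} \psi(G_S) \geq m^2 \left(1 - \tfrac{1}{\log m}\right)^{\alpha \log m}$.

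Combining the two bounds yields exactly the claimed estimate. There~is no~substantive obstacle: the only mildly subtle point is~that leaves of~$D$ occur at~varying depths, so~one must absorb all their $(1 - 1/\log m)$-factors into the worst-case depth $\alpha \log m$; since $1 - 1/\log m < 1$, this~is immediate, as~raising to~a~larger exponent only makes the factor smaller.
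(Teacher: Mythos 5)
Your proposal is correct and follows essentially the same route as the paper: apply \Cref{lemma:base_andreev} at each (special) leaf of~$D$, iterate the semi-subadditivity estimate~\eqref{eq:semi_subadditivity} down the tree~$D$ of depth at most $\alpha \log m$ to get $\sum_{S}\psi(G_S) \geq \psi(G_T)\left(1 - \frac{1}{\log m}\right)^{\alpha \log m}$ with $\psi(G_T) = m^2$, and sum the disjoint sub-protocol sizes. Your explicit induction and the remark about absorbing the factors for leaves at varying depths merely spell out steps the paper leaves implicit.
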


\begin{proof}
    \Cref{lemma:base_andreev} states that the number of leaves needed to resolve any leaf $S$ of the tree $D$ is $\Omega \left(\psi(G_S) \cdot \L_{\frac{3}{4}}(f) / \log^2 m\right)$.
	Let $\mathcal S$ be the set of all leaves of the tree $D$.
	Using estimate \eqref{eq:semi_subadditivity}, we have:
	\[
	  \psi(G_T) \cdot \left(1 - \frac{1}{\log m}\right)^{\alpha \log m} \leq \sum_{S \in \mathcal S} \psi(G_S).
  \]
    Since $\psi(G_{T})=m^2$ (by~\Cref{lemma:block_graph_properties}),
	Then, the number of leaves in $P$ is
	 \[
	     \Omega \left(\sum_{S \in \mathcal S} \frac{\psi(G_S) \cdot \L_{\frac{3}{4}}(f)}{\log^2 m}\right) \geq \Omega\left(\frac{m^2 \cdot \L_{\frac{3}{4}}(f)}{\log^2 m} \left(1 - \frac{1}{\log m}\right)^{\alpha \log m}\right).
	\]
\end{proof}

Recall that $\alpha$ is a~constant.
Assuming $m \geq 4$, we have $\log m \geq 2$, and thus $1 - \frac{1}{\log m} \geq e^{-\frac{2}{\log m}}$.
Then,
\[
    \frac{m^2 \cdot \L_{\frac{3}{4}}(f)}{\log^2 m} \left(1 - \frac{1}{\log m}\right)^{\alpha \log m} \geq \frac{m^2 \cdot \L_{\frac{3}{4}}(f)}{\log^2 m} e^{-\frac{2}{\log m} \cdot \alpha \log m} \geq m^{2 - \varepsilon} \cdot \L_{\frac{3}{4}}(f),
\]

for any constant $\varepsilon > 0$ when $m$ is sufficiently large.
Hence, the number of leaves needed for a protocol $P$ is  $m^{2 - o(1)} \cdot \L_{\frac{3}{4}}(f)$.

Finally, we get rid of~the assumption that the depth of~$P$ is~logarithmic and
prove the main result.
\begin{proof}[Proof of \Cref{theorem:main_classes}]
    Let $P$~be a~protocol with $m^{2 - \varepsilon} \cdot \L_{\frac{3}{4}}(f)$ leaves, for some $\varepsilon > 0$, solving $\KW_{\XOR_m} \circledast \KW_f$.
    We transform it into a~protocol $P'$ with $(m^{(2 - \varepsilon)} \cdot \L_{\frac{3}{4}})^{\gamma}$ leaves and depth bounded by $3(3 - \varepsilon) k \ln 2 \cdot \log m$, by applying \Cref{theorem:formula_balancing}, where $\gamma = 1 + \frac{1}{1 + \log(k - 1)}$.
    (\Cref{theorem:formula_balancing} is~stated in~terms of~formulas, but it~is not difficult to~see that it~works also for protocols
    for strong composition.)

    Since $\varepsilon > 0$ and $\lim_{k \to \infty} \gamma = 1$, there exist $k$ and $\varepsilon' > 0$ such that
    \[\left( m^{2 - \varepsilon} \cdot \L_{\frac{3}{4}}(f) \right)^{\gamma} \le m^{2 - \varepsilon'} \cdot \L_{\frac{3}{4}}(f),\] since $\L_{\frac{3}{4}}(m) \le m$.
    Hence, protocol $P'$ has logarithmic depth and at most $m^{2 - \varepsilon'} \cdot \L_{\frac{3}{4}}(f)$ leaves, which contradicts \Cref{theorem:andreev_bound_l_k}.
    Therefore, $P$ has $\Omega\left(m^{2 - o(1)} \cdot \L_{\frac{3}{4}}(f)\right) = \Omega(n^{2 - o(1)} \cdot \L_{\frac{3}{4}}(f))$ leaves.
\end{proof}

\bibliographystyle{plain}
\bibliography{references}

\end{document}